\newif\ifcomments\commentsfalse
\newif\ifanon\anonfalse
\newif\iffullversion\fullversiontrue
\newif\ifecon\econfalse

\iffullversion
    \documentclass[12pt,a4paper]{article}
    \usepackage{fullpage}
    \usepackage[T1]{fontenc}
\else
    \documentclass[runningheads]{llncs}
    \usepackage[T1]{fontenc}
\fi

\usepackage{tikz}
\usepackage{graphicx} \usepackage{amsfonts}
\usepackage{amsmath}
\usepackage{amssymb}
\usepackage{bm}
\usepackage{subcaption} \usepackage{enumitem}
\usepackage{multirow}
\usepackage{booktabs}
\usepackage[mathscr]{euscript}

\usepackage[hyphens]{url}
\usepackage[colorlinks,urlcolor=MidnightBlue,citecolor=black,linkcolor=BlueViolet]{hyperref}
\usepackage{breakurl}
\usepackage{algorithm}
\usepackage{algpseudocode}

\iffullversion\else
\usepackage{color}
    
\fi

\usepackage{xspace}

\usepackage[capitalise]{cleveref}

\iffullversion
    \usepackage{amsthm}
    \newtheorem{theorem}{Theorem}
    \newtheorem{lemma}{Lemma}
    \newtheorem{observation}{Observation}
    
    \theoremstyle{remark}
    \newtheorem{remark}{Remark}

\else
\spnewtheorem{observation}{Observation}{\bfseries}{\rmfamily}

\fi
\newcommand{\qqed}{\iffullversion\else\qed\fi}

\newtheorem{innercustomthm}{Theorem}
\newenvironment{customthm}[1]
  {\renewcommand\theinnercustomthm{#1}\innercustomthm}
  {\endinnercustomthm}

\iffullversion
  \crefname{section}{Section}{Sections}
  \crefname{appendix}{Appendix}{Appendices}
  \crefname{figure}{Figure}{Figures}
  \crefname{definition}{Definition}{Definitions}
  \crefname{observation}{Observation}{Observations}
  \crefname{remark}{Remark}{Remarks}
\else
  \crefname{section}{\S}{\S\S}
  \crefname{appendix}{App.}{Apps.}
  \crefname{figure}{Fig.}{Figs.}
  \crefname{definition}{Def.}{Defs.}
  \crefname{observation}{Observation}{Observations}
\fi

\usepackage[dvipsnames]{xcolor}
\usepackage{ulem}
\renewcommand{\emph}[1]{\textit{#1}}

\renewcommand{\phi}{\varnothing}
\newcommand{\eps}{\mathsf{e}}
\newcommand{\Poisson}{\mathrm{Poisson}}
\newcommand{\One}{\mathbf{1}}

\newcommand{\customskip}{\iffullversion\medskip\else\smallskip\fi}
\newcommand{\subh}[1]{\customskip \noindent \textit{{#1}}}
\renewcommand{\paragraph}[1]{\subh{#1.}}
\newcommand{\qparagraph}[1]{\subh{#1?}}

\usepackage[maxcitenames=3, maxbibnames=30]{biblatex}
\def\blu#1{\textcolor{blue}{#1}}

\def\e{equilibrium\xspace}
\def\a{\sigma}

\DeclareMathOperator*{\E}{\mathbb{E}} \renewcommand{\Re}{\mathbb{R}}
\renewcommand{\O}{\mathcal{O}}
\newcommand{\C}{\mathcal{C}}
\newcommand{\OSet}{\bm{\mathbf{O}}} 
\newcommand{\CSet}{\bm{\mathbf{C}}}
\newcommand{\OCSet}{\{\OSet,\CSet\}}

\newcommand{\sys}{\mathcal{T}}
\newcommand{\syss}{\bm{\mathbf{T}}}
\newcommand{\G}{\mathsf{G}}
\newcommand{\B}{\mathsf{B}}
\newcommand{\BG}{\{\B,\G\}}
\newcommand{\Win}{\mathsf{Win}}
\newcommand{\WWin}{\mathsf{BadWin}}

\newcommand{\nature}{ethical type\xspace}

\ifcomments
    \newcommand{\ip}[1]{{\textcolor{Purple}{I: #1}}}
    \newcommand{\sunoo}[1]{{\textcolor{WildStrawberry}{S: #1}}}
\else
    \newcommand{\ip}[1]{\ignorespaces}
    \newcommand{\sunoo}[1]{\ignorespaces}
\fi

\title{The False Allure of ``Security by Obscurity'': \\ A Game-Theoretic Approach}
\iffullversion\else
    \titlerunning{The False Allure of ``Security by Obscurity''}
\fi
\ifanon
    \author{Anonymized for submission}
\else
    \iffullversion
        \author{In-Uck Park\\University of Bristol \and Sunoo Park\\New York University}
    \else
\author{In-Uck Park\inst{1}\orcidID{1111-2222-3333-4444} \and Sunoo Park\inst{2}\orcidID{0000-0002-1884-9585}}
        \institute{University of Bristol, Bristol, UK \and New York University, New York, NY, USA}
        \authorrunning{S. Park and I. Park}
    \fi
\fi
\date{}

\begin{document}

\maketitle

\begin{abstract}
“Security by obscurity,” or the idea that hiding how a system works helps protect it from adversaries,
has long been poorly regarded in the security community. Yet its intuitive appeal persists, and it has resurfaced repeatedly in industry practices and public debates over the years.
The logic, on the one hand, is that open scrutiny of systems is the best way to find and respond to  problems; and on the other hand, that too much openness can increase risk by giving adversaries a roadmap for attack.
This isn't purely a technical question, but a human one: it’s about how attackers and good-faith security researchers respond to openness in the real world. 
As such, security experts base their stance on decades of community experience; however, the field lacks empirical or theoretical foundations to back up these long-standing beliefs.
We provide the first rigorous modeling of ``security by obscurity,'' aiming to capture how rational actors will respond to openness or obscurity in fallible systems. We prove that in equilibrium under realistic model conditions, ``obscured'' systems are more likely to suffer successful attacks than ``open'' systems ---
establishing theoretical support for the longstanding received wisdom.
\iffullversion\else
    \keywords{security by obscurity \and game theory \and transparency}
\fi
\end{abstract}

\section{Introduction}\label{sec:intro}

\begin{quote}
    ``Hiding security vulnerabilities in algorithms, software and/or hardware decreases the likelihood that they will be repaired and increases the likelihood that they can and will be exploited by evil-doers.'' \cite{Bellovin2002Obscurity}
\end{quote}

Such is the prevailing consensus within the security and cryptography communities, which have argued against the practice of ``security by obscurity'' since decades ago. Related academic arguments can be traced at least as far back as Kerckhoffs's Principle in 1883, long before modern computers existed, which advised that military cipher systems ``should not require secrecy, and should be able to be stolen by the enemy without causing trouble''\footnote{In the original French: ``Il faut qu’il n’exige pas le secret, et qu’il puisse sans inconvénient tomber entre les mains de l’ennemi''~\cite[vol. 1, p. 12]{Kerckhoffs1883crypto}. \iffullversion
English translation by Fabien Petitcolas~\cite{petitcolas-kerck}.\fi}
\cite{Kerckhoffs1883crypto,petitcolas-kerck}. In modern times, the idea that secrecy enhances security has been decried as a ``fallacy'' and a ``myth'' \cite{Bellovin2002Obscurity,schneier2004nonsecurity} that is ``dangerous [for] the security of computer systems'' \cite{Bellovin2002Obscurity} and moreover ``contributes \dots to a false sense of security'' \cite{schneier2004nonsecurity}.

There are a number of rationales for the idea that obscurity increases the vulnerability of systems in practice. Perhaps the prominent is that ``public scrutiny is the only reliable way'' we know to improve security \cite{schneier2004nonsecurity}. Designing infallible systems is beyond the reach of the current and foreseeable engineering capabilities, and so ``failure is unavoidable and should be planned for'' \cite[p. 97]{ViegaMcGraw}.
In this context, ``[r]evealing and discussing vulnerabilities in hardware and software products allows the users to protect themselves, and encourages general protection and repair strategies'' \cite{Bellovin2002Obscurity}. Conversely, ``when security professionals withhold [vulnerability] information from the community, the broader community does not have an opportunity to find solutions'' \cite{Bellovin2002Obscurity} or to ``assess[\,] their own risks'' \cite{schneier2004nonsecurity}. ``[T]he result can be that the bad guys know about the problem long before most defenders do'' \cite{Bellovin2002Obscurity}. Another strand of reasoning suggests system owners may be more lax about secure design and mitigation of flaws under cover of secrecy~\cite{schneier2004nonsecurity}.

\iffullversion
\begin{quote}
    ``Before software bugs were routinely published, software companies denied their existence and wouldn't bother fixing them, believing in the security of secrecy. And because customers didn't know any better, they bought these systems, believing them to be secure.''\footnote{This quote is from 2004\ifecon~\cite{schneier2004nonsecurity}\fi; recent research suggests that companies denying the existence of bugs and not fixing them remains a practice that is not uncommon \cite{ParkThomas26}.} \ifecon\else\cite{schneier2004nonsecurity}\fi
\end{quote}
\fi

Yet the appeal of the ``security by obscurity'' argument has proven intuitive and enduring outside of the security community. The ideas that ``if the attacker doesn't know how it works, or where it is, they [are less likely to be able to] exploit it'' 
and that publishing vulnerability information simply ``give[s] attackers a roadmap'' to attack \cite{ristord-sbo} are intuitive --- and not wrong, either.
The essence of the established wisdom against security by obscurity is that the (perhaps less intuitive) security benefits of openness far outweigh these drawbacks \cite{schneier2004nonsecurity}.

Meanwhile, software and hardware vendors who ``want to keep vulnerability information secret'' \cite{schneier2004nonsecurity} --- not only to hide flaws, but also to shield intellectual property and other business interests --- have every incentive to take advantage of the intuitive ``security by obscurity'' narrative to argue for secrecy. More recently, AI companies have adopted similar narratives to justify model secrecy \cite{HallMP25}.

\medskip

But \emph{why} is there such a strong community consensus against ``security by obscurity''? Relevant sources point to the ``long history'' and experience of the field, sometimes with anecdotal evidence \cite{Bellovin2002Obscurity,Courtois09,Frei09}, and generally state the kinds of reasoning above without further justification. \iffullversion{What is more, d}\else{D}\fi{etailed} sources on ``security by obscurity'' are relatively sparse (see \iffullversion{Hall et al.~}\fi\cite[\S2.1]{HallMP25} and \cref{sec:related}). The community consensus thus appears truly based more on collective experience than on written records\iffullversion\ or peer review\fi.

The history and experience of the field is venerable, and deserves deference. But arguments against ``security by obscurity'' would have more clout, especially outside the field, if supported by more theoretical or empirical foundations. Providing solid evidence to support our position is essential to reach a wider audience and counter the intuitive, recurring appeal of ``security by obscurity.'' It is, moreover, an urgent matter of public interest when our reliance on digital infrastructure is growing and cybersecurity threats are at an all-time high~\cite{ncsc-cybersec}.

\medskip

The rationales against ``security by obscurity'' are statements not of computer science, but rather, of security economics and human behavior. They are statements about how (rational) humans --- malicious hackers, ethical hackers, system owners, and users --- will behave in a realistic ecosystem where systems are vulnerable and need ongoing updates to improve their security. Specifically, they are statements about how those humans will behave with respect to systems that are ``obscured'' to deter scrutiny (e.g., by hiding how they work), versus systems that are ``open'' and invite scrutiny.

\begin{figure}[t]
  \centering
  
\begin{subfigure}[b]{0.32\textwidth}
    \centering
    \begin{tikzpicture}[
graphic node/.style={
        inner sep=0pt, rounded corners=2pt
      },
bipartite edge/.style={
        dotted,
        thick,
        gray!80
      }
    ]
    
\node[graphic node] (L1) at (0, 1) {\includegraphics[width=1.2cm]{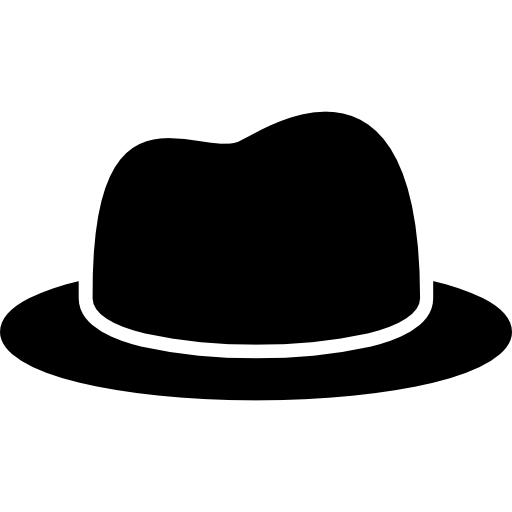}};
      \node[graphic node] (L2) at (0,-1) {\includegraphics[width=1.2cm]{figures/hat-b.png}};
    
\node[graphic node] (R1) at (2, 1) {\includegraphics[width=1.2cm]{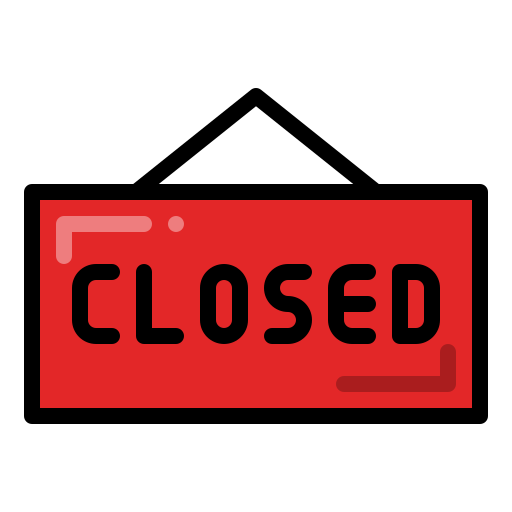}};
      \node[graphic node] (R2) at (2,-1) {\includegraphics[width=1.2cm]{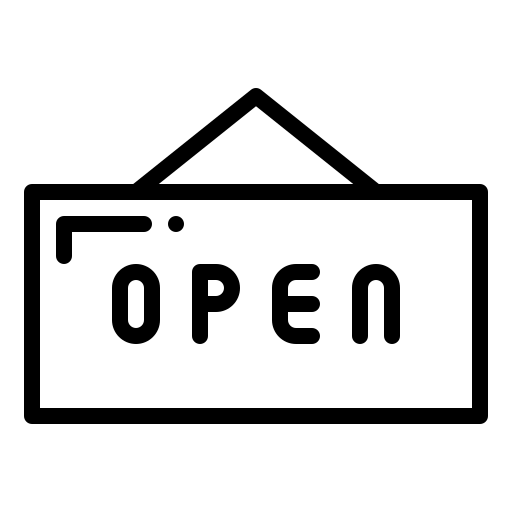}};
    
\draw[bipartite edge] (L1) -- (R1);
      \draw[bipartite edge] (L1) -- (R2);
      \draw[bipartite edge] (L2) -- (R1);
      \draw[bipartite edge] (L2) -- (R2);
    
    \end{tikzpicture}
    \caption{Simple Model (\S~\ref{sec:model-simple})}
    \label{subfig:simple}
  \end{subfigure}
  \hfill \begin{subfigure}[b]{0.32\textwidth}
    \centering
    \begin{tikzpicture}[
graphic node/.style={
        inner sep=0pt, rounded corners=2pt
      },
bipartite edge/.style={
        dotted,
        thick,
        gray!80
      }
    ]
    
\node[graphic node] (L1) at (0, 1) {\includegraphics[width=0.6cm]{figures/hat-b.png}};
      \node[graphic node] (L2) at (0, 0.33) {\includegraphics[width=0.6cm]{figures/hat-b.png}};
      \node[graphic node] (L3) at (0, -0.2) {\vdots};
      \node[graphic node] (L4) at (0,-1) {\includegraphics[width=0.6cm]{figures/hat-b.png}};
    
\node[graphic node] (R1) at (2, 1) {\includegraphics[width=1.2cm]{figures/closed.png}};
      \node[graphic node] (R2) at (2,-1) {\includegraphics[width=1.2cm]{figures/open.png}};
    
\draw[bipartite edge] (L1) -- (R1);
      \draw[bipartite edge] (L1) -- (R2);
      \draw[bipartite edge] (L2) -- (R1);
      \draw[bipartite edge] (L2) -- (R2);
      \draw[bipartite edge] (L4) -- (R1);
      \draw[bipartite edge] (L4) -- (R2);
    
    \end{tikzpicture}
    \caption{Generalization I (\S~\ref{sec:model-general})}
    \label{subfig:general1}
  \end{subfigure}
  \hfill \begin{subfigure}[b]{0.32\textwidth}
    \centering
    \begin{tikzpicture}[
graphic node/.style={
        inner sep=0pt, rounded corners=2pt
      },
bipartite edge/.style={
        dotted,
        thick,
        gray!80
      }
    ]
    
\node[graphic node] (L1) at (0, 1) {\includegraphics[width=0.6cm]{figures/hat-b.png}};
      \node[graphic node] (L2) at (0, 0.33) {\includegraphics[width=0.6cm]{figures/hat-b.png}};
      \node[graphic node] (L3) at (0, -0.2) {\vdots};
      \node[graphic node] (L4) at (0,-1) {\includegraphics[width=0.6cm]{figures/hat-b.png}};
    
\node[graphic node] (R1) at (2.2, 1.1) {\includegraphics[width=1cm]{figures/closed.png}};
      \node[graphic node] (R2) at (2.2, 0.33) {\includegraphics[width=1cm]{figures/open.png}};
      \node[graphic node] (R3) at (2.2, -0.25) {\vdots};
      \node[graphic node] (R4) at (2.2,-1) {\includegraphics[width=1cm]{figures/closed.png}};
    
\draw[bipartite edge] (L1) -- (R1);
      \draw[bipartite edge] (L1) -- (R2);
      \draw[bipartite edge] (L1) -- (R4);
      \draw[bipartite edge] (L2) -- (R1);
      \draw[bipartite edge] (L2) -- (R2);
      \draw[bipartite edge] (L2) -- (R4);
      \draw[bipartite edge] (L4) -- (R1);
      \draw[bipartite edge] (L4) -- (R2);
      \draw[bipartite edge] (L4) -- (R4);
    
    \end{tikzpicture}
    \caption{Generalization II (\S~\ref{sec:model-general-manysystems})}
    \label{subfig:general2}
  \end{subfigure}

  \caption{Overview of Our Models}
  \label{fig:models}
\end{figure} 
\paragraph{Our contributions}
We present the first game-theoretic models and analysis of ``security by obscurity,'' examining one prominent rationale against obscurity: namely, that obscuring systems can result in ``the bad guys know[ing] about the problem[s in that system] long before most defenders do'' \cite{Bellovin2002Obscurity}. 

We first introduce a simple model with two players and two systems (\cref{subfig:simple}). Players may choose to examine a system and look for vulnerabilities, and they come in two types, which we call \emph{good} and \emph{bad}, corresponding to the ``defenders'' and the ``bad guys'' in the preceding quote. 
We assume good players disclose vulnerabilities and seek to have them fixed, whereas bad players exploit vulnerabilities (or sell them for exploitation by others), and model the utility gained by good players on finding a vulnerability as generally lower than the utility gained by bad players (see \cref{sec:model-overview}).
Systems also come in two types, \emph{open} and \emph{closed}, where closed systems are modeled as more costly or difficult to examine than open ones. The probability of a hacker first finding a vulnerability in given system (which we call ``winning'') is modeled by a probabilistic skill-based contest rule (\eqref{p(s)} in \cref{sec:model-overview}).
We then present generalized models with many hackers and many systems (\cref{subfig:general1,subfig:general2}).
This type of model is known as a \emph{strategic entry game}.

In all our models, we prove that in equilibrium under realistic conditions, closed systems are more likely than open systems to have their vulnerabilities first discovered by bad hackers rather than good hackers. As good hackers will disclose vulnerabilities and seek to have them fixed, whereas bad hackers will exploit vulnerabilities (or sell them for exploitation by others), this means that closed systems are less secure than open systems in that they are more likely to suffer successful attacks based on a vulnerability before it is fixed.

Our result can be understood as an adverse-selection effect. A closed system is attractive only to hackers whose expected benefit from targeting that system is high enough to justify the higher cost of examination. If bad hats value finding vulnerabilities more (deterministically or stochastically), they are more willing to pay that cost. Therefore, at any given skill level, bad hackers are more likely to examine a closed system, while good hackers more likely prefer an open system.

Two equilibrium forces are central to our analysis. First, {\it congestion} makes a system less attractive as more hats enter it. Second, {\it selection} arises because systems with higher entry costs are chosen only by hackers whose expected benefit is sufficiently high. The interaction between congestion and selection determines both the number and composition of hackers who choose to examine a given system and, consequently, the probability that each type wins each system.

\paragraph{Roadmap}
Next, \cref{sec:related} briefly covers related work. Section~\ref{sec:model-overview} introduces our models and results at a high level, and justifies modeling choices. Section~\ref{sec:model-prelim} establishes notation and preliminaries, \cref{sec:model-simple} details our simple model and analysis, and \cref{sec:model-general,sec:model-general-manysystems} present our generalized models and analysis. Finally, \cref{sec:discussion} offers discussion.

\bigskip

\section{Related Work}\label{sec:related}

Before modern cryptography and systems security, even critical military communications often relied on security by obscurity~\cite{Tohe2022codetalkers,KatzLindell}. We refer to \iffullversion{Kahn~}\fi\cite{KahnCodebreakers} for a history of security by obscurity in the context of codemaking and codebreaking. 

Modern security literature tends to frame ``security by obscurity'' as a dangerous fallacy that is unlikely to improve security at best, and likely to harm security at worst, citing a range of reasons already discussed in \cref{sec:intro}~\cite{Bellovin2002Obscurity,Shipman2019opencode,Mercuri2003obscurity,schneier2004nonsecurity,diffie2003risky}; however, academic treatments are scarce. To our knowledge, there are no rigorous theoretical or empirical analyses to date of the rationales against ``security by obscurity.'' On the empirical side, this may relate to the practical difficulties of obtaining a realistic data set and of analyzing what would have happened in the alternative (i.e., analyzing what would have happened if a system that was open/closed in reality had instead been closed/open).

Security economics has long established that ``security failure is caused by bad
incentives at least as often as by bad design'' \cite{AM07,ross_econ}. The field of security economics considers how incentives influence security outcomes, and our work lies squarely within its domain. However, we are unaware of prior security economics work that has rigorously analyzed the rationales against ``security by obscurity.''

Prior work has documented the motivations of ethical hackers \cite{hackerone-hacker,ParkThomas26} --- that they often see research as a public service, and feel a duty to report findings even at considerable risk to themselves, and even when they could profit financially much more by exploiting or selling their findings instead. 
\iffullversion
    We base aspects of our model on the findings of \ifecon{Park and Thomas~}\fi\cite{ParkThomas26}, as further discussed in \cref{sec:model-overview}.
\fi

\section{Overview of Our Models}\label{sec:model-overview}

Next, we introduce our models and justify our modeling choices at a high level. 

\paragraph{The Game in a Nutshell}
Our models consider a set of players and a set of systems which contain vulnerabilities. Each player may choose to ``target'' any of the systems and look for vulnerabilities in them, or to target no system. We say that the player\blu{(s)} who first find\blu{\sout{s}} a reportable or exploitable vulnerability in a given system ``wins'' that system. Winning yields a reward, but vulnerability hunting incurs some cost, so which system to target is a strategic decision\iffullversion{ for players}\fi.

Our main models assume a single winner per system and players that target one system, for simplicity, but they extend naturally to multiple winners and players targeting multiple systems (see \cref{rmk:multiple-winners,rmk:multiple-targets}\iffullversion\else, \cref{app:thm2-remarks}\fi). Searching for vulnerabilities in systems over an extended period can be modeled by a sequence of games like ours, with an updated game starting once a system is won.

\paragraph{The Players (``Hats'')}
Players come in two \emph{ethical types}, which we call good ($\G$) and bad ($\B$). These correspond to ``white-hat'' and ``black-hat'' hackers, respectively: white hats ``are cybersecurity professionals who use their skills to identify and fix security vulnerabilities, helping to protect systems and data from attacks,''
whereas black hats maliciously exploit security vulnerabilities for ``personal gain, such as stealing sensitive data, disrupting systems, or committing fraud''\ifecon.\cite{hackerone-hats}\else~\cite{hackerone-hats}.\fi\footnote{These are not the only two traditional categories of hackers~\cite{hackerone-hats}; however, modeling these two types suffices to demonstrate our point.} In a nod to this terminology, we often refer to players as ``hats.'' 

Selling or exploiting vulnerabilities can be quite profitable: for example, important vulnerabilities can be worth ``life-changing money'' \cite{ParkThomas26}, and ransomware and data theft can extort millions of dollars \cite{data-theft}. In contrast, white-hat hackers find and report vulnerabilities so that they can be fixed tend to be motivated by the desire to improve the security of systems, and may also receive smaller-scale personal benefit in the form of professional reputation or bug bounties \cite{ParkThomas26}. The professional and financial gain of operating as a white hat is generally far lower than the potential profit from operating as a black hat\ifecon.\cite{ParkThomas26}\else~\cite{ParkThomas26}.\fi\footnote{As one anonymous white-hat hacker put it,
    ``I’m not at a place where I need [the money] at this point, so I am definitely more likely to disclose \dots but I can’t say that that would’ve been the case \dots 10 years ago, 15 years ago, I would’ve taken every dollar I could get \dots  blood money or not. \dots
	It’s easy when I don’t worry about my mortgage [or my] bills, but if I was worrying about any of that \dots there’s no way that I’d [report the finding] as opposed to [selling it] \dots'' \cite{ParkThomas26}.
    \label{fn:bloodmoney}
}

The simplest perspective on our models is that hats are either good or bad \emph{intrinsically}: their ethical type determines whether they report or exploit a vulnerability upon discovery, and their utility 
\iffullversion
    from winning 
\fi
differs between the two types systematically. Personal differences within each ethical type can then be captured by a motivation parameter $z_i\in\Re_+$ assigned to each player.\footnote{\iffullversion Prior research~\cite{ParkThomas26} documents some of the ways that personal ethical decision-making differs between individual ethical hackers.\ \fi Compare, for example, \cref{fn:bloodmoney} with an ethical hacker who reported declining to sell their findings to a mysterious buyer who approached them: ``I did think about it though, right, like holy shit, that’s life-changing money. `I could move my entire family out of the `hood' money, right --- so it was an ethical debate for me.'' \cite{ParkThomas26}.}
An alternative approach could have all hats decide whether to fix or exploit vulnerabilities upon finding them, based on the utilities they anticipate from the two options. In this case, the financial profit from exploitation for those who report can be thought to be offset by a large negative utility arising from a personal ethical preference, which may vary by individual. Such individual variance can, also here, be captured by a motivation parameter $z_i$. In reality, of course, all options --- from maliciously exploiting the vulnerability to ethically reporting it --- are generally available.

Our models take the simpler former approach. However, as detailed in \cref{app:alt-model}, the latter modeling approach --- that all options are available to all players --- is reducible to our models, essentially by labeling those who rationally choose to report or exploit as ``good'' and ``bad'' respectively.

Regardless of their type $\eps\in\BG$ and motivation $z$, each player also has a \emph{skill level} $s$ in a closed interval $S\subset\Re_+$ which affects their chances of winning according to a standard \emph{weighted contest-success probability}, where player $i\in[n]$ has the following probability of winning in an $n$-player contest: 
\begin{equation}
p_i(s_1,\cdots,s_n)\ :=\ \frac{w(s_i)}{\sum_{j=1}^nw(s_j)}\ \in\ (0,1]\ ,
\label{p(s)}
\end{equation}
where $s_j$ is the skill level of player $j\in[n]$ and  $w:S\to(0,\infty)$ is a continuous and strictly increasing weight function.

In sum, the personal characteristics of each player are $(\eps,s,z)$. We model these as drawn independently from publicly known distributions for every player. In particular, we model the distributions of skill as the same across good and bad hats, in order to isolate the impact of utility differences between ethical types.

\paragraph{The Systems} 
We model two types of systems: open ($\O$) and closed ($\C$). 
The idea of openness here encompasses system transparency and documentation, as well as openness to scrutiny more broadly (e.g., 
\iffullversion
    pledging not to antagonize or sue independent good-faith researchers, 
    a risk that is known to deter research, 
    or encouraging and offering payments to researchers for reporting their findings \cite{ParkThomas26}).
\else
    paying researchers ``bounties'' for their findings).
\fi
 As such, open systems can be examined at a lower cost than closed systems. Our models assume, for simplicity, that examining open systems costs zero\footnote{This is true, for example, when players discover vulnerabilities by accident; prior work has documented that this is not uncommon~\cite{ParkThomas26}. Our results hold as long as examining open systems costs zero for some fraction, however small, of good hats. Perhaps surprisingly, our results do not immediately generalize when open systems have small non-zero entry costs for every player; \cref{rmk:open-sys-entry-costs} (\cref{app:thm2-remarks}) discusses why we believe, in practice, open systems are more secure in this case nonetheless.} and examining closed systems costs more than zero.

\qparagraph{Which Type of System Is More Secure} We assess which type of system is more secure by analyzing the probabilities that it will be won by a white hat or a black hat. We consider a system more secure than another if it has a higher probability of being won by a white hat (who will not exploit it, and will enable fixing the security issue), and a lower probability of being won by a black hat (who will maliciously exploit it or enable others to do so). This conception of security parallels the received wisdom that one of the problems with security by obscurity is ``that the bad guys know about the problem long before most defenders do'' \cite{Bellovin2002Obscurity}. Our analysis shows that open systems are ``more secure'' than closed systems in this sense, consistently across our models.

\paragraph{Simple Model: 2 Hats, 2 Systems} Our simple model (\cref{sec:model-simple}, depicted in \cref{subfig:simple}) considers just two hats and two systems: one open system and one closed system that are identical in all respects except for being open vs. closed (i.e., cost to examine). Bad hats derive greater utility from winning than good hats do. The simple model omits the motivation parameter~$z$.

\paragraph{Generalizing: Many Hats} Our first generalized model (\cref{sec:model-general}, depicted in \cref{subfig:general1}) considers hats with the full set $(\eps,s,z)$ of characteristics, and two systems, one open and one closed, which have different intrinsic values --- representing the idea that some systems are more valuable targets than others. Bad hats no longer deterministically derive greater utility from winning than good hats; instead, the utility of winning for any given hat follows a distribution that depends on the system's intrinsic value and bad hats \emph{likely} derive greater utility from winning\iffullversion\ than good hats\fi.

\paragraph{Generalizing: Many Systems}
Our second generalized model builds on the first by considering multiple systems which may be open or closed, with an intrinsic value assigned to each system, and costs of examination varying across systems.

\qparagraph{Why Is Equilibrium the Condition We Care About} 
A Nash \e ``captures a steady state of the play of a strategic game in which each player holds the correct expectation about the other players’ behavior and acts rationally'' \cite[p. 14]{OsborneRubinstein94}. Definitionally, in an \e, no individual player would benefit from disrupting it by behaving differently. 
Nash equilibrium is the standard solution concept studied in game theory.
More pragmatically, an important line of economics literature has suggested that equilibria may ``arise[\,] as the long-run outcome of a process in which less than fully rational players grope for optimality over time'' \cite{FudenbergLevine98,MilgromRoberts90}.
Experimental evidence moreover supports equilibrium as a possible steady state of experienced or repeated interaction \cite{CooperDejongForsytheRoss90,VanhuyckBattalioBeil90}.

\paragraph{Overview of Our Results}
In essence, we examine whether either of the two opposing views on ``security by obscurity'' is justifiable as stable behavior of rational hackers (i.e., an \e behavior of a game) in a plausible model that captures real-world incentives (as documented, e.g., by \cite{ParkThomas26}). 
We believe such a model should reflect that: (i) bad hats derive higher rewards than good hats, stochastically; (ii) closed systems are not impenetrable; and (iii) the number of potential hackers is substantial. We show that the folklore is justified in every \e: i.e., any open system has a strictly higher (lower, resp.) \emph{unconditional} probability to be won by a $\G$-hat ($\B$-hat, resp.) than any closed system.

As a preliminary step, we show in a simple model (2-hat, 2-system) that the ordering prevails conditional on either system having a winner, essentially because the closed system is more likely targeted by hackers expecting higher rewards (who tend to be $\B$-hats). We then show that this result extends to the unconditional likelihood of winner type when there are sufficiently many hats, in both generalized models\iffullversion.\ 
    This last step is not obvious, especially when the fraction $\beta$ of $\B$-hats is small.
\else\ (which is not obvious, especially when $\beta$ is small).
\fi

Our results are meaningful in practice when they do not require an unrealistically large number of hats. One may wonder how large the number of hats must be, especially when $\beta$ is relatively small. We argue that our results are realistic because the probability of a system having no winner vanishes rapidly as the expected number of hats who target that system increases.
For example, if $I$ hats target a system with the same probability so that the expected number targeting that system is $n$, the probability of no winner (i.e., of no one targeting the system) is $(1-n/I)^I$\iffullversion, which decreases in $n$ and increases in $I$\fi: thus, already for $n=5$ for example, $(1-n/I)^I|_{n=5} \le \lim_{I\to\infty}(1-n/I)^I|_{n=5}= e^{-5}\approx 0.0067$.
\cref{app:simulation}'s approximations show the number of hats needed is realistic
for sensible parameters.

\paragraph{Discussion}
We consider \cref{thm:main2} our main result; however, we note that our simple model is incomparable. The simple model illustrates the essence of our result with minimal analytic complexity. Our analysis of the generalized models show that our result still holds for larger numbers of systems and hats, when taking into account more realistic variation in their properties.

Our models are necessarily stylized, not capturing all realistic factors at play. We view their simplicity as a strength, given our aim to demonstrate a core dynamic of strategic behavior based on minimal modeling assumptions. We discuss our models' generalizability in several dimensions, in \cref{app:thm2-remarks}.

\section{Notation and Game-Theory Preliminaries}\label{sec:model-prelim}

\paragraph{Notation} $x\sim D$ denotes a random variable $x$ has distribution $D$, 
``i.i.d.'' denotes ``independent and identically distributed.'' ``CDF'' denotes ``cumulative distribution function.'' $\Delta(X)$ denotes the set of all probability distributions over a set $X$, a probability distribution is \emph{atomless} if it assigns 0 to every singleton $\{x\}\subset X$, and $[n]$ denotes the set $\{1,\dots,n\}$ for 
\iffullversion
    a natural number 
\fi
$n\in\mathbb{N}$. 
\iffullversion
    $\text{supp}$ denotes 
    the support of a function 
    and $\inf$ denotes infimum.
\fi

\smallskip

A \emph{game} specifies a set of players indexed by $i\in [I]$, an available \emph{action set} $A_i$ for each player, an \emph{outcome function} $\zeta:A_1\times\dots\times A_I\to \Delta(\Omega)$ where $\Omega$ is an outcome space,  and a \emph{utility function} $u_i:\Omega\to\Re$ for each player $i$ that assigns a payoff or \emph{utility} to every possible outcome.
The outcome experienced by each player depends not only on their own action, but also on the actions of others. 

We consider {\it Bayesian games}, in which each player possesses private information $\theta_i\in\Theta_i$ which we call the player's \emph{type}, drawn from a known distribution, and players' outcome and utility functions are contingent on these types, i.e., 
\[
\zeta:\Theta_1\times\cdots\times\Theta_I\times A_1\times\dots\times A_I\to \Delta(\Omega)
\quad\mbox{and}\quad u_i:\Theta_i\times\Omega\to\Re\ .
\]
Each player $i$, knowing their own type $\theta_i$, chooses an action $a_i\in A_i$ without observing others' types or action choices. A \emph{strategy} of player $i$ is a mapping, $\a_i:\Theta_i\ \rightarrow\ \Delta(A_i)$, from $i$'s type to a probability distribution over actions.
The description of the game is common knowledge.\footnote{``[Some information] is common knowledge if players know [it], know that the other players know [it], and so on {\it ad infinitum}.'' \cite[p. 541]{FudenbergTirole91}} 

In our model, each player $i$ has a private \emph{type} $\theta_i=(\eps_i,s_i,z_i)\in\Theta$, where $\eps_i\in\BG$ is hat $i$'s \nature (bad or good), $s_i\in S$ is a skill parameter, and $z_i\in\Re_+$ is a motivation parameter. 
The action set $A_i$ comprises a set of systems of which any hat may target one, and the option of not targeting any system, which we denote by $\phi$;
an outcome is specified by which system each hat targeted and a winner (if any) of each system. 
Hence, $\Theta_i=\Theta$ and $A_i=A$ for all $i$; from now, we just use $\Theta$ and $A$.

An \emph{action profile} $a=(a_i)_{i\in[I]} \in A^{I}$ specifies an action for each player. Contingent on a \emph{type profile} $\theta=(\theta_i)_{i\in [I]} \in \Theta^{I}$
which specifies a type for each player, an action profile $a$ determines the expected payoff of each player $i$ as 
\[
\E \big[u_i(\theta_i, \zeta(\theta,a))\big] = \int_\Omega u_i(\theta_i,\omega)\ d\zeta(\theta,a)\ ,
\]
where the expectation is taken over outcomes generated by $\zeta(\theta,a)\in\Delta(\Omega)$.\footnote{This assumes the requisite measurability of $\zeta$ (for the integral). Throughout, we assume the requisite measurability conditions whenever integration is invoked.} Abusing notation, we use $u_i(\theta,a):=\E[u_i(\theta_i, \zeta(\theta,a))]$ to denote player $i$'s expected payoff for an action profile $a$ and type profile $\theta$.

A {\it strategy profile} $\a=(\a_i)_{i\in [I]}$, specifies a strategy $\a_i$ for each player $i$. 
We may write $\a=(\a_i,\a_{-i})$ where $\a_{-i}=(\a_j)_{j\in [I]\setminus\{i\}}$, and similarly for $a,\theta$.

A strategy profile $\a$ is a {\it Bayesian Nash equilibrium}\footnote{A Bayesian Nash equilibrium is a Nash equilibrium in a Bayesian game.} if no player of any type can improve its expected payoff by changing their strategy while all other players keep their strategies unchanged,\footnote{Game theory assumes that each player's preference over possible outcomes has an expected-utility  representation \cite{vonNeumannMorgenstern44}, hence each player's objective is to maximize their expected payoff given the strategies of others. } that is, if for all $\theta_i\in\Theta$, $\a'_i\in \Delta(A)$ and $i\in [I]$,
\[
\E_{\theta_{-i}}\big[
u_i(\theta,\a_i(\theta_i),\a_{-i}(\theta_{-i}))\big]
\ge \E_{\theta_{-i}}\big[u_i(\theta,\a_i',\a_{-i}(\theta_{-i}))\big]\ ,
\]
where the expectation is taken over the private types $\theta_{-i}$ of every player other than $i$. For brevity, we write just ``equilibrium'' in the rest of the paper.

\section{Simple Model}\label{sec:model-simple}

We start with the simplest possible model that conveys our core result (\cref{subfig:simple}): there are two systems, one open and one closed, and two hats $i=1,2$. Each hat is bad ($\eps_i=\B$) with probability $\beta\in(0,1)$ and good ($\eps_i=\G$) with probability $1-\beta$, and has a private i.i.d. skill level $s_i \sim F$ where $F$ is an atomless CDF with full support on a compact (i.e., closed and bounded) interval $S$ normalized to $[0,1]$. Each hat decides whether to target the open system ($\O$) at zero cost, target the closed system ($\C$) for a cost $k>0$, or not target either, i.e., $A=\{\O,\C,\phi\}$.

A hat $i$ targeting a system $\sys\in\{\O,\C\}$ wins it with a standard weighted contest-success probability as introduced in \eqref{p(s)}. From winning either system, hat $i$ obtains a reward of $V_{\eps_i}$ depending on their ethical type $\eps_i\in\BG$ where $V_\B>V_\G>0$. Hence, the expected (net) payoff of targeting $\sys$ is $V_{\eps_i}$ times the probability of winning it if $\sys=\O$, and that minus $k$ if $\sys=\C$. 
The simple model omits the motivation parameter $z_i$,
hence hat $i$'s type is $\theta_i=(\eps_i,s_i)\in\BG\times S$.

\begin{lemma}\label{prop:simple}\label{lem:simple}
If $V_\B\ge 2k$ in the simple model described above, 
the likelihood that the closed system $\C$ is won by a bad hat (conditional on $\C$ having a winner) is strictly greater than the likelihood that the open system $\O$ is won by a bad hat (conditional on $\O$ having a winner).
\end{lemma}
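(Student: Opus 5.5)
The plan is to prove the statement for an arbitrary equilibrium $\a=(\a_1,\a_2)$, using a selection argument skill level by skill level. Fix hat $i$ with skill $s$, and let $j$ be the other hat. Because targeting $\O$ costs nothing and yields a nonnegative expected reward, $\phi$ is weakly dominated by $\O$. Ties occur only on a null set of skills, so I will treat every hat as choosing $\O$ or $\C$.

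Write $q^i_\sys(s)$ for hat $i$'s probability of winning $\sys\in\{\O,\C\}$ at skill $s$, given $\a_j$:
\[
q^i_\sys(s)=\E_{\theta_j}\Big[\One\{a_j\neq\sys\}+\One\{a_j=\sys\}\tfrac{w(s)}{w(s)+w(s_j)}\Big].
\]
Set $D_i(s):=q^i_\C(s)-q^i_\O(s)$. The key observation is that $D_i$ depends only on $\a_j$ and not on hat $i$'s own ethical type. Hat $i$ of type $\eps$ strictly prefers $\C$ iff $V_\eps D_i(s)>k$, i.e.\ iff $D_i(s)>k/V_\eps$. Since $k/V_\B<k/V_\G$, the set of skills at which a good hat $i$ chooses $\C$ is contained (up to ties) in the set at which a bad hat $i$ does. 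Hence for every $s$,
\[
\Pr(\eps_i=\B\mid s_i=s,a_i=\C)\ \ge\ \beta\ \ge\ \Pr(\eps_i=\B\mid s_i=s,a_i=\O).
\]
Both inequalities are strict exactly when $k/V_\B< D_i(s)\le k/V_\G$.

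Next I convert these posteriors into the winner's type. Conditional on the skills and actions of both hats, their ethical types are independent and distributed according to the posteriors above. The event that $\C$ has a winner is a union of disjoint cells: exactly one hat in $\C$, or both hats in $\C$. On each cell, the probability that the winner of $\C$ is bad equals a $w$-weighted average of the posteriors of the hats present in $\C$, so it is $\ge\beta$. Symmetrically, the probability that the winner of $\O$ is bad is $\le\beta$ on each cell. Integrating over cells gives the weak version of the statement. For the strict version it suffices that, for some hat $i$, both of the following hold:
\begin{enumerate}[label=(\roman*)]
\item the set $M_i=\{s:k/V_\B<D_i(s)<k/V_\G\}$ has positive $F$-measure;
\item the relevant events have positive probability, namely $\C$ is entered with positive probability and $\O$ is entered with positive probability.
\end{enumerate}
Under (i), on $M_i$ only bad hats enter $\C$ and only good hats enter $\O$, which makes the inequalities strict on cells of positive probability.

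The main obstacle is establishing (i) and (ii), and this is where $V_\B\ge 2k$ enters. For (ii), suppose $\O$ were entered with probability zero, i.e.\ $j$ always targets $\C$. Then $q^i_\O\equiv1>q^i_\C$, so $i$ targets $\O$, which is a contradiction. Thus each hat's $\O$-entry has positive probability. Conversely, suppose neither hat enters $\C$ with positive probability. Then $q^i_\C\equiv1$ and
\[
q^i_\O(0)=\E\Big[\tfrac{w(0)}{w(0)+w(s_j)}\Big]<\tfrac12,
\]
because $w$ is strictly increasing and $F$ is atomless. This gives $D_i(0)>1/2$, so $V_\B D_i(0)>V_\B/2\ge k$, and a bad hat near skill $0$ would deviate to $\C$, again a contradiction. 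Thus $\C$ is entered with positive probability, and in fact $D_i$ exceeds $k/V_\B$ somewhere.

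For (i), the continuity of $w$ and the atomlessness of $F$ make $D_i$ continuous in $s$. If $D_i$ exceeded $k/V_\B$ somewhere but fell below $k/V_\G$ somewhere, the intermediate value theorem would give an interval inside $M_i$, and full support would make its measure positive. The remaining case to exclude is $D_i\ge k/V_\G$ everywhere, which forces all types of hat $i$ into $\C$. I would rule this out by checking whether the same hat $j$ can then prefer $\O$ at every skill in a way consistent with $D_j$. If a degenerate configuration survives, such as one hat always in $\C$ and the other always in $\O$, I would verify the conditional winner probabilities directly in that case. That case check is where I expect the most care to be needed.
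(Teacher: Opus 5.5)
Your route is the paper's own: your $D_i$ is its $\tilde p_i$. You use the same three regimes around $k/V_\B<k/V_\G$, the same intermediate-value step, and the same use of $V_\B\ge 2k$ at the lowest skill. The gap is the case you leave open, and it cannot be closed for arbitrary (asymmetric) equilibria.

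Suppose hat $i$ always targets $\C$ and hat $j$ always targets $\O$. Then $q^j_\O\equiv1>q^j_\C$, so $\O$ is optimal for every type of $j$. Also $D_i(s)=\E[w(s_j)/(w(s)+w(s_j))]$, which is smallest at $s=1$, so $\C$ is optimal for every type of $i$ as soon as $V_\G\,\E[w(s_j)/(w(1)+w(s_j))]\ge k$. For instance, take $F$ uniform, $w(s)=1+s$, $k=1$, $V_\G=3$, $V_\B=4$. The condition reads $3(1-2\ln\tfrac43)\approx1.27\ge1$, and $V_\B\ge2k$ holds. In this equilibrium each system is won with certainty by one fixed hat, whose type is $\B$ with probability $\beta$. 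So both conditional probabilities equal $\beta$, and your planned direct check would produce a counterexample, not a strict inequality. Relatedly, your argument for (ii) only shows that the two hats do not both always target $\C$. It does not show that each hat enters $\O$ with positive probability.

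What is missing is symmetry. The paper closes this case by noting that if condition (i) covers $S$ then \emph{both} hats target $\C$ with probability one, so either can deviate to $\O$ and win for sure at zero cost. That inference is valid only when $\a_1=\a_2$, so that $\tilde p_1=\tilde p_2$; this is the convention the paper states explicitly for its Poisson models. With that restriction your proof closes at once. If $D\ge k/V_\G>0$ on all of $S$, both hats are in $\C$, which forces $q_\O\equiv1>q_\C$ and hence $D<0$, a contradiction. Your remaining cases go through as written. So you need to state and prove the lemma for symmetric equilibria, or otherwise rule out the asymmetric corner.

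A smaller point: your assertion that ties occur only on a null set of skills needs an argument. It holds because $D_i$ is real-analytic in $w(s)$ and tends to $0$ as $w(s)\to\infty$, so its level sets at the positive values $k/V_\eps$ are finite. Alternatively, you can avoid it by carrying weak inequalities through the binding cases, as the paper does.
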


\begin{proof}
Consider an arbitrary strategy profile $\a=(\a_1,\a_2)$ for the two hats, where
\[
 \a_i: \BG\times S \to\ \Delta\big(\{\O, \C,\phi\}\big)\ .
\]
From hat $i$'s perspective, the rival hat, which we denote by $-i$, targets system $\sys\in\{\O,\C\}$ with probability
\begin{equation}
q^\sys_{-i}(s_{-i}|\a_{-i})\ :=\ \beta\cdot\a_{-i}(\B,s_{-i})(\sys)\,+\, (1-\beta)\cdot\a_{-i}(\G,s_{-i})(\sys)\ , \label{qiT}
\end{equation}
depending on their skill level $s_{-i}$, where $\a_{-i}(\eps_{-i},s_{-i})(\sys)$ is the probability that $\a_{-i}$ assigns to targeting $\sys$ when their type $\theta_{-i}$ is $(\eps_{-i},s_{-i})$. 
Upon targeting system $\sys$, therefore, hat $i$ wins $\sys$ with probability 
\begin{equation}
p_i^\sys(s_i|\a) := 
\int_{s_{-i}} \left[p_i(s_1,s_2)\cdot q_{-i}^\sys(s_{-i}|\a_{-i})+
\left(1-q_{-i}^\sys(s_{-i}|\a_{-i})\right)\right]\,dF\ , \label{piT}
\end{equation}
because they win with probability $p_i(s_1,s_2)$ if hat $-i$ targets the same system and $1$ otherwise (as defined in \eqref{p(s)}).
Note that this yields positive winning probability conditioned on targeting a system.

Thus, each hat $i$ of type $\theta_i=(\eps_i,s_i)$ obtains an expected payoff $V_{\eps_i}\cdot p^\O_i(s_i|\a)$ from targeting system $\O$ and $V_{\eps_i}\cdot p^\C_i(s_i|\a)-k$ from targeting $\C$. Since the former payoff is positive (since $V_{\eps_i}>0$ by assumption) and the payoff from targeting neither system is zero, it is optimal for hat $i$ of type $(\eps_i,s_i)$ to act as follows:
$$
\begin{array}{rll}
\multirow{2}{*}{\hbox{if~~$V_{\eps_i}\cdot\big(p_i^\C(s_i|\a)-p_i^\O(s_i|\a)\big)$} ~~\Big\{} & \quad \ge k, ~~ \mbox{target}~\C\\
& \quad \le k, ~~ \mbox{target}~\O\ .
\end{array}
$$

Therefore, $\a=(\a_1,\a_2)$ is an \e if and only if
\begin{equation}
\text{supp}(\a_i(\eps_i,s_i))\subseteq\ \begin{cases} \{\C\} &\mbox{if}\ \ V_{\eps_i}\cdot\big(p^\C_i(s_i|\a)-p^\O_i(s_i|\a)\big)> k\\
\{\O,\C\} &\mbox{if}\ \ V_{\eps_i}\cdot\big(p^\C_i(s_i|\a)-p^\O_i(s_i|\a)\big)= k\\
\{\O\} &\mbox{if}\ \ V_{\eps_i}\cdot\big(p^\C_i(s_i|\a)-p^\O_i(s_i|\a)\big)< k\ . \end{cases}  \label{A*i}
\end{equation}

Because $V_\B>V_\G$, bad hats are more willing than good hats to pay the cost of targeting $\C$. Hence, for a given skill level $s_i$, if a $\G$-hat would target $\C$, so would a $\B$-hat; and if a $\B$-hat would target $\O$, so would a $\G$-hat. This divides skill levels $s_i$ into three conditions depending on the \e behavior of hats, as below:

\begin{enumerate}[label=(\roman*)]
    \item $p^\C_i(s_i|\a)-p^\O_i(s_i|\a)\ge k/V_\G\ \Longrightarrow\ $ target $\C$ regardless of \nature $\eps_i$ (unless the inequality binds and $\eps_i=\G$, in which case target either system), \label{itm:condition-ge}
    \item $p^\C_i(s_i|\a)-p^\O_i(s_i|\a)\le k/V_\B\ \Longrightarrow\ $ target $\O$ regardless of \nature $\eps_i$ (unless the inequality binds and $\eps_i=\B$, in which case target either system),~and \label{itm:condition-le}
    \item $k/V_\B<p^\C_i(s_i|\a)-p^\O_i(s_i|\a)< k/V_\G\ \Longrightarrow\ $ target $\C,\O$ if $\eps_i=\B,\G$ respectively. \label{itm:condition-between}
\end{enumerate}

\noindent These three conditions are mutually exclusive, since $k/V_\B < k/V_\G$.

\begin{observation}\label{prop:bad-more-likely1}
    Conditional on having a skill level $s_i$ satisfying \ref{itm:condition-ge} or \ref{itm:condition-le}, a hat is a $\B$-hat with a probability weakly higher (lower, resp.) than $\beta$ if they target $\C$ ($\O$, resp.), and strictly so for $s_i$ satisfying \ref{itm:condition-between}.
\end{observation}

Since winning probabilities depend on skill (see \eqref{p(s)}) but not on a hat's \nature $\eps_i$,
\cref{prop:bad-more-likely1} 
moreover holds for the winning hat, and is strict if \ref{itm:condition-between} holds with positive probability (i.e., for a positive measure of skill levels).

Let us define $\tilde{p}_i:= p^\C_i(s_i|\a)-p^\O_i(s_i|\a)$, which is continuous in $s_i$. Since \ref{itm:condition-ge} requires $\tilde{p}_i\ge k/V_\G$ while \ref{itm:condition-le} requires $\tilde{p}_i\le k/V_\B<k/V_\G$, if conditions \ref{itm:condition-ge} and \ref{itm:condition-le} each prevail at some skill level, the
intermediate value theorem implies that there is a nonempty interval of skill levels outside of \ref{itm:condition-ge} and \ref{itm:condition-le}. Hence, condition \ref{itm:condition-between} must prevail throughout that interval, and thus with positive probability under our full-support assumption on the skill
distribution $F$. 

It remains to consider the possibilities that condition \ref{itm:condition-ge} covers the entire skill space $S$, or \ref{itm:condition-le} does.
In the former case, since both hats target $\C$ with probability $1$, either hat can profitably deviate to $\O$, where it would then win for sure and avoid the cost $k>0$. Hence,  condition \ref{itm:condition-ge} cannot cover all of $S$ in \e.

If condition \ref{itm:condition-le} covers $S$, both hats target $\O$ with probability $1$, obtaining an expected payoff of $V_{\eps_i}\cdot \E _{s_{-i}}[p_i(s_1,s_2)]$. For equilibrium to hold, the payoff $V_{\eps_i}-k$ that they would get by targeting $\C$ alone instead should be no higher for all types. This condition, namely,
\iffullversion
    \[
    V_{\eps_i}-k\ \leq\ V_{\eps_i}\cdot \E _{s_{-i}}[p_i(s_1,s_2)]\qquad \forall (\eps_i,s_i)\in\BG\times S\ ,
    \]
\else
    $\forall (\eps_i,s_i)\in\BG\times S,~V_{\eps_i}-k\ \leq\ V_{\eps_i}\cdot \E _{s_{-i}}[p_i(s_1,s_2)]$,
\fi
is possible only if $V_\B\!<\!2k$ because
\iffullversion
    $$
    \inf \E _{s_{-i}}[p_i(s_1,s_2)] \quad\le\quad \E _{s_{-i}}[p_i(s_1,s_2)]\big|_{s_i=\min S} \quad<\quad 1/2\ .$$
\else
    $
    \inf \E _{s_{-i}}[p_i(s_1,s_2)] \quad\le\quad \E _{s_{-i}}[p_i(s_1,s_2)]\big|_{s_i=\min S} \quad<\quad1/2\ .$
\fi
\cref{prop:simple} follows. \qqed
\end{proof}

\iffullversion
    In sum, in our simple model, the open system is more secure in the sense of \cref{prop:simple}, except when $k$ is large enough ($V_\B<2k$) that the closed system is not worth pursuing even when all hats compete in the open system.
\fi

\section{Generalized Model I: Many Hats}\label{sec:model-general}

\label{sec:model-general-hats}

There are again two systems, an open system $\O$ and a closed system $\C$, but now many hats, the number $I$ of which is uncertain. Following Myerson's seminal work \cite{myerson98} which first defined games with a large, uncertain number of players, we assume that $I$ is a Poisson random variable with mean $\lambda$, i.e., $I\sim \operatorname{Poisson}(\lambda)$.\footnote{This means $I$ equals any nonnegative integer $n$ with probability $e^{-\lambda}\cdot \lambda^n/n!$.\label{fn:poisson-def}} We, like Myerson, work with the Poisson distribution because of its convenient properties, such as \emph{environmental equivalence} (see footnote \ref{fn:env eqv}); but our results hold more broadly, as discussed in \cref{rmk:poisson}\iffullversion\else~in \cref{app:thm2-remarks}\fi.

Each hat $i$ is either bad or good with probability $\beta$ and $1-\beta$, respectively, and has a skill level $s_i \sim F$ as before. A hat $i$ targeting a system $\sys\in\{\O,\C\}$ wins with the weighted contest-success probability (\ref{p(s)})\iffullversion\ depending on the number of all hats targeting the same system and their skill levels\fi.
The reward from winning system $\sys$ is $z_i\cdot V_\sys$ where $V_\sys$ is a commonly known, intrinsic value of the system $\sys$ (which may differ across systems) and $z_i$ is hat $i$'s private motivation parameter drawn from an atomless distribution with CDF $H_{\eps_i}$, associated density function $h_{\eps_i}$, and full support $Z:=\Re_+$,\footnote{Our results hold when $Z$ is bounded so long as $\max Z$ is sufficiently large; see \cref{app:thm2-remarks}.} depending on \nature $\eps_i\in\BG$.  $H_\B$ first-order stochastically dominates $H_\G$ strictly, meaning:
\begin{equation}
H_\B(z)\ <\ H_\G(z)\qquad\forall z>0\ . \label{fosd}\end{equation}
Thus, individual hats value each system differently. However, 
\iffullversion
    \eqref{fosd}
\else
    the first-order stochastic domination
\fi
means that $\B$-hats value systems more than $\G$-hats, stochastically. This generalizes our simpler model, in which hats value both systems identically and $\B$-hats value them more than $\G$-hats by a fixed amount, to a wider range of systems and hats' preferences.

Since hat $i$'s type is now $\theta_i=(\eps_i,s_i,z_i)$, a {\it strategy} of hat $i$ is a function 
\[
\a_i: \BG\times S\times Z \to\ \Delta\big(\{\O, \C,\phi\}\big)\ .
\]

Symmetric equilibrium is the accepted solution concept for Poisson games, because from any player's perspective not knowing who else is playing, all other players are indistinguishable \cite{myerson98}.\iffullversion\footnote{``It cannot be commonly perceived that two different individuals of the same type would behave differently because \dots [they] have no commonly known attributes by which others can distinguish them. \dots 
In effect, population uncertainty forces us to treat players symmetrically in our game-theoretic analysis.'' \ifecon\else---Myerson\fi \cite{myerson98}.}\fi\ 
Accordingly, in our generalized models, by ``\e,'' we mean a symmetric \e (i.e., in which $\a_i=\a$ for all $i$).

\begin{theorem}\label{thm:main} 
The following holds in every \e in the two-system Poisson game just described. 
\begin{enumerate}[label=(\alph*)]
\item The open system $\O$ is strictly more likely to be won by a $\G$-hat than the closed system $\C$ is, conditional on the system having a winner. \label{itm:thm:conditional-on-winner}
\item The unconditional probability that a system is won by a $\G$-hat is strictly higher for $\O$ than for $\C$, if $\lambda$ is large enough. \label{itm:thm:lambda}
\item The unconditional probability that a system is won by a $\B$-hat is strictly higher for $\C$ than for $\O$, if $\lambda$ is large enough and $h_\G$ has a thinner upper~tail: \label{itm:thm:lambda-and-tail}
\begin{equation}
\frac{h_\G(z)}{h_\B(z)} \le \chi\ \text{ for some $\chi<1$, for all $z$ above some threshold.} \label{uppertail2}
\end{equation}
\end{enumerate}
\end{theorem}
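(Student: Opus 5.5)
We work with an arbitrary symmetric \e $\a$ and use Poisson environmental equivalence. Under $\a$, each hat targets $\C$ or $\O$ with some probability, and the hats targeting system $\sys$ form a Poisson population of mean $\lambda_\sys:=\lambda\,\Pr[\a\text{ picks }\sys]$. Their skills are i.i.d. with a conditional distribution $F_\sys$. So a hat of skill $s$ targeting $\sys$ wins with probability $\pi_\sys(s):=\E\big[w(s)/(w(s)+W_\sys)\big]$, where $W_\sys$ is the total weight of the Poisson rivals in $\sys$. This does not depend on $\eps$ or $z$. The expected payoff is therefore $zV_\O\pi_\O(s)$ from $\O$ and $zV_\C\pi_\C(s)-k$ from $\C$, and $\phi$ is never strictly optimal because $\pi_\O>0$.

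\textbf{Step 1: cutoff structure.} A hat of type $(\eps,s,z)$ targets $\C$ iff $z\big(V_\C\pi_\C(s)-V_\O\pi_\O(s)\big)>k$. Ties occur with probability zero because $H_\eps$ is atomless. Define $d(s):=V_\C\pi_\C(s)-V_\O\pi_\O(s)$. At each $s$ with $d(s)>0$, the hat chooses $\C$ iff $z>z^*(s):=k/d(s)$; if $d(s)\le0$, it chooses $\O$. The cutoff is the same for both ethical types.

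\textbf{Step 2: part (a).} By (\ref{fosd}), at every $s$ with $d(s)>0$ we have $1-H_\B(z^*)>1-H_\G(z^*)$. So conditional on skill $s$ and targeting $\C$, a hat is bad with probability strictly above $\beta$. Conditional on targeting $\O$, it is bad with probability weakly below $\beta$, and strictly below whenever $d(s)>0$.

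We also need the set $\{s:d(s)>0\}$ to have positive measure, i.e.\ $\C$ is targeted with positive probability. If it were not, $\C$ would be empty, so $\pi_\C\equiv1$ and $d(s)=V_\C-V_\O\pi_\O(s)$. Since $Z=\Re_+$ is unbounded, hats with $z$ large enough would then deviate to $\C$ whenever $d>0$. We need to check that $V_\C>V_\O\pi_\O(s)$ for some $s$; this holds at $s=\min S$ because $\pi_\O(\min S)<1$ when $\lambda>0$. Similarly, $\O$ is targeted with positive probability, since hats with $z$ near $0$ choose $\O$.

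Because the winning probability in a system depends only on skill, the winner's ethical type is drawn from the skill-weighted mixture of these conditional probabilities. This gives $\Pr[\B\mid \C\text{ won}]>\beta>\Pr[\B\mid\O\text{ won}]$. The key is that $\beta$ separates the two conditional probabilities, so no comparison of the skill distributions across systems is needed. Part (a) follows.

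\textbf{Step 3: parts (b) and (c).} The unconditional probabilities are $\Pr[\sys\text{ won}]\cdot\Pr[\eps\mid\sys\text{ won}]$, with $\Pr[\sys\text{ won}]=1-e^{-\lambda_\sys}$. The main obstacle is to show that both $\lambda_\O$ and $\lambda_\C$ go to infinity as $\lambda\to\infty$, uniformly over equilibria, while the conditional gap in part (a) stays bounded away from zero.

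For $\lambda_\O$: hats with $z<k/V_\C$ always choose $\O$, so $\lambda_\O\ge\lambda\,\Pr[z<k/V_\C]\to\infty$. For $\lambda_\C$, I would argue by contradiction. If $\lambda_\C$ stayed bounded along a sequence of equilibria, then $\pi_\C$ would stay bounded below while $\pi_\O\to0$, because $\lambda_\O\to\infty$ and weights are bounded. Then $d(s)$ would be bounded away from zero uniformly in $s$, so a fixed positive mass of $z$ values would choose $\C$. This would give $\lambda_\C\ge c\lambda\to\infty$, a contradiction. Hence both $1-e^{-\lambda_\sys}\to1$.

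For part (b), I would then need the gap $\Pr[\G\mid\O\text{ won}]-\Pr[\G\mid\C\text{ won}]$ to stay bounded away from zero. The term $\Pr[\G\mid\O]\ge1-\beta$ is automatic, so it suffices to bound $\Pr[\G\mid\C]$ below $1-\beta$ uniformly. Since $z^*(s)\ge k/V_\C>0$, the first-order stochastic dominance ratio $\big(1-H_\B(z^*)\big)/\big(1-H_\G(z^*)\big)$ is bounded above $1$ on any compact range of $z^*$. I expect the case $z^*\to\infty$ to be the delicate part.

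This is where condition (\ref{uppertail2}) enters for part (c). Under that condition, $\big(1-H_\G(z)\big)/\big(1-H_\B(z)\big)\le\chi$ for large $z$, so the bad share in $\C$ is uniformly above $\beta$ by a fixed margin. I would try to show directly that this tail condition gives a uniform margin for the $\B$-comparison, and use the compact-range argument for (b). If the fine details resist, I would accept a weaker ``for $\lambda$ large, the probability that $\O$ has no winner is negligible'' comparison. That comparison suffices for (b), because $\Pr[\G\mid\C]<1-\beta\le\Pr[\G\mid\O]$ strictly and the loss $e^{-\lambda_\O}$ can be made smaller than the equilibrium gap by the asymptotic argument.
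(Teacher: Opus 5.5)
Your argument for part (b) has a genuine gap. Part (b) is asserted \emph{without} \eqref{uppertail2}, yet your route needs $\Pr[\G\mid\C\text{ won}]$ to stay \emph{uniformly} below $1-\beta$, and that can fail. Your own divergence argument gives $\lambda_\C\to\infty$, hence $\pi_\C\to0$ uniformly and $z^*(s)\ge k/(V_\C\pi_\C(s))\to\infty$ uniformly. If $(1-H_\B(z))/(1-H_\G(z))\to1$ as $z\to\infty$, which strict first-order dominance allows, the bad share among $\C$-winners tends to $\beta$. Your fallback discards the factor $1-e^{-\lambda_\C}$ in $\Win_\C^\G$ and then needs $e^{-\lambda_\O}$ to be smaller than this conditional gap. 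But the gap is itself a function of $\lambda$ tending to $0$, and nothing you have shown makes it dominate $e^{-\lambda_\O}\ge e^{-\lambda}$; for tail ratios converging to $1$ fast enough it does not. The missing idea, which is the paper's, is to compare the two systems' winner probabilities rather than bound the closed one by $1$. Since $\inf_s z^*(s)\to\infty$, you get $\lambda_\C/\lambda\le 1-H(\inf_s z^*(s))\to0$ with $H=\beta H_\B+(1-\beta)H_\G$. So $\lambda_\O>\lambda_\C$ and $\Win_\O=1-e^{-\lambda_\O}>1-e^{-\lambda_\C}=\Win_\C$ for large $\lambda$. Part (a) alone then gives $\Win_\O^\G\ge(1-\beta)\Win_\O>(1-\beta)\Win_\C>\Win_\C^\G$, with no uniform margin needed. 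Your uniform-margin route does prove (b), but only under \eqref{uppertail2}.

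There is also a smaller error in Step 2: $\pi_\O(\min S)<1$ does not give $V_\C>V_\O\pi_\O(\min S)$ when $V_\C<V_\O$. In fact, if $V_\O e^{-\lambda}>V_\C$, everyone targeting $\O$ is an equilibrium, because then $d(s)\le V_\C-V_\O e^{-\lambda}<0$ for every $s$. In that equilibrium $\C$ has a winner with probability zero, so the conditional probability in (a) is undefined. The paper restricts (a) to equilibria in which the closed system is targeted with positive probability, and you should do the same rather than try to prove $\C$ is always targeted; at large $\lambda$ this is automatic from your divergence argument. The remaining pieces are sound. Your two-system contradiction argument for $\lambda_\C\to\infty$ is a streamlined version of the paper's uniform-congestion argument. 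For (c), your margin works: $(1-H_\B)/(1-H_\G)$ is bounded above $1$ on the compact range $[k/V_\C,\widehat z]$ by continuity and strict dominance, and is at least $1/\chi$ beyond $\widehat z$. This holds for every $\lambda$ without needing $z^*\to\infty$, whereas the paper obtains its margin by pushing all $\C$-targeters above the tail threshold. Combined with $\Win_\C\to1$ and $\Win_\O^\B\le\beta$, either version yields (c).
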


The last condition \eqref{uppertail2} means that bad hats are more likely than good hats to have motivation parameters above a threshold, reflecting the idea that bad hats have access to larger financial rewards (see \cref{sec:model-overview}).

Next, we give a proof sketch detailing the main logical steps that lead to our results, not relying on the specific distribution of $I$ as Poisson. \cref{app:proof2} provides a full proof assuming a Poisson distribution; however, our results hold for other distributions so long as they do not invalidate the high-level steps sketched below.

\begin{proof}[sketch]
In \e, the expected winning probability of hat $i$ of skill $s_i$ targeting system $\sys\in\{\O,\C\}$ is  the expectation of $ p_i(s_i,s_{-i}|_\sys)$ where $s_{-i}|_\sys$ is the skill profile of other hats who also target $\sys$, conditional on all other hats behaving according to the \e strategy $\a$. Let $P_\sys\!(s_i|\a)$ denote this expected winning probability; we derive its expression in \cref{app:Psys}.
Then, hat $i$ of type $\theta_i=(\eps_i,s_i,z_i)$ expects a net payoff of 
\[
z_i\cdot V_\O\cdot P_\O(s_i|\a) \qquad\text{and}\qquad
z_i\cdot V_\C\cdot P_\C(s_i|\a) - k\ ,
\]
respectively, from targeting systems $\O$ and $\C$.

Since $z_i\cdot V_\O\cdot P_\O(s_i|\a)>0$ and the payoff from not targeting any system is zero, analogously to before (see \eqref{A*i}), $\a$ is an \e if
\begin{equation}
\text{supp}\big(\a(\eps_i,s_i,z_i)\big)\ \subseteq\ \begin{cases} \{\C\} &\mbox{if}\ \ z_i\cdot\big(V_\C\cdot P_\C(s_i|\a)- V_\O\cdot P_\O(s_i|\a)\big)> k\\
							\{\O,\C\} &\mbox{if}\ \ z_i\cdot\big(V_\C\cdot P_\C(s_i|\a)- V_\O\cdot P_\O(s_i|\a)\big)= k\\							\{\O\} &\mbox{if}\ \ z_i\cdot\big(V_\C\cdot P_\C(s_i|\a)- V_\O\cdot P_\O(s_i|\a)\big)< k\ . \end{cases} \label{a(...)}
\end{equation}

Note that (\ref{a(...)}) prescribes the same condition for both $\eps_i\in\BG$, so good and bad hats behave identically conditional on their characteristics $(s_i,z_i)$. However, behavior diverges between ethical types because their $z_i$ is distributed differently, as explained next.
\iffullversion

\fi
First, observe from (\ref{a(...)}) that, as before, every hat of every type targets one of the two systems in \e. Conditional on any given $s_i$, 
\begin{enumerate}[label=(\roman*)]
\item If $V_\C\cdot P_\C(s_i|\a)>V_\O\cdot P_\O(s_i|\a)$, there is a cutoff $z^*(s_i)$ such that  hats with $(s_i,z_i)$ target $\C$ if $z_i>z^*(s_i)$ and target $\O$ if  $z_i<z^*(s_i)$. \label{itm:generalzstar:cutoff}

\item If $V_\C\cdot P_\C(s_i|\a)\le V_\O\cdot  P_\O(s_i|\a)$, all hats with $s_i$ target $\O$ because  $$z_i\cdot(V_\C\cdot P_\C(s_i|\a)- V_\O\cdot P_\O(s_i|\a))<k$$ for all $z_i$. In this case, we set $z^*(s_i)=\infty$. \label{itm:generalzstar:infty}
\end{enumerate}

All hats with skill levels $s_i$ for which case \ref{itm:generalzstar:infty} applies target $\O$ regardless of their other characteristics $(\eps_i,z_i)$. Hence, the ratio of $\B$-hats to $\G$-hats who target $\O$ is $\beta$. All hats $i$ with every other skill level $s_i$, for which case \ref{itm:generalzstar:cutoff} applies, target $\C$ if $z_i>z^*(s_i)$  and target $\O$ if $z_i<z^*(s_i)$ regardless of their \nature $\eps_i$. Hence, since $H_\B$ first-order stochastically dominates $H_\G$, the $\B$ to $\G$ ratio of those who target $\C$ exceeds $\beta$ and the $\B$ to $\G$ ratio of those who target $\O$ is below $\beta$. This means that, conditional on either system having a winner, the likelihood that the winner is a bad hat is strictly higher than $\beta$ in the closed system but is lower than $\beta$ in the open system, establishing part \ref{itm:thm:conditional-on-winner} of \cref{thm:main}.

For parts \ref{itm:thm:lambda} and \ref{itm:thm:lambda-and-tail}, we use the following properties that hold for large $\lambda$.
\begin{observation}\label{obs:biglambda}
It holds that
\begin{equation}
P_\O(s_i|\a)\to 0,\quad P_\C(s_i|\a)\to 0,\quad\mbox{and}\quad z_i^*(s_i)\to\infty\quad\forall s_i\quad\text{as}\quad\lambda\to\infty\ . \label{largelda}
\end{equation}
\end{observation}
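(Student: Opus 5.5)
The plan is to exploit Poisson environmental equivalence. In a symmetric equilibrium $\a$, the number of other hats targeting $\sys$ is Poisson with mean $\lambda\pi_\sys$, where $\pi_\sys$ is the probability that a random hat targets $\sys$. Their skills are i.i.d.\ with some conditional distribution $F_\sys$. Every hat targets $\O$ or $\C$, so $\pi_\O+\pi_\C=1$. Hence at least one system receives expected entry $\lambda\pi_\sys\ge\lambda/2$.

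For a hat of skill $s_i$ facing $N\sim\Poisson(\lambda\pi_\sys)$ rivals, the winning probability is bounded by
\[
P_\sys(s_i|\a)\le \E\left[\frac{w(1)}{w(1)+\sum_{j\le N} w(s_j)}\right].
\]
Since $w\ge w(0)>0$, the sum is at least $N w(0)$. The bound is then $O\bigl(\E[1/(1+N)]\bigr)$, and for Poisson this equals $(1-e^{-\mu})/\mu\to0$ as $\mu\to\infty$. So whichever system has $\pi_\sys\ge1/2$ has $P_\sys\to0$ uniformly in $s_i$.

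The main obstacle is showing that the other system is also crowded, i.e.\ that $\lambda\pi_\sys\to\infty$ for both $\sys$. I would argue by contradiction along a sequence $\lambda_n\to\infty$.

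First suppose $\lambda\pi_\C$ stays bounded along a subsequence. Then $P_\C(s_i|\a)$ is bounded below by $\Pr(N=0)=e^{-\lambda\pi_\C}\ge c>0$, while $P_\O\to0$. Consequently
\[
V_\C P_\C-V_\O P_\O\ge V_\C c/2>0,
\]
so the cutoff satisfies $z^*(s_i)\le 2k/(V_\C c)$, which is bounded. By the full support of $H_\eps$ on $\Re_+$, a fixed positive fraction $\delta$ of hats of every skill target $\C$. This gives $\pi_\C\ge\delta$, so $\lambda\pi_\C\to\infty$, a contradiction.

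Next suppose $\lambda\pi_\O$ stays bounded. Then $\pi_\C\to1$ and $P_\C\to0$, while $P_\O\ge e^{-\lambda\pi_\O}\ge c>0$. Thus $V_\C P_\C-V_\O P_\O<0$ for large $\lambda$, so every hat chooses $\O$ by case \ref{itm:generalzstar:infty}. That forces $\pi_\O=1$, a contradiction.

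Hence both expected entries diverge and $P_\O,P_\C\to0$ uniformly in $s_i$. For each $s_i$, either $z^*(s_i)=\infty$, or $z^*(s_i)=k/(V_\C P_\C-V_\O P_\O)\ge k/(V_\C P_\C)\to\infty$. I would take care to make the Poisson bound uniform over the possibly $\lambda$-dependent $F_\sys$. This is handled by the skill-free bound via $w(0)$, so I do not expect it to be an obstacle.
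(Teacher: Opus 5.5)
Your argument is correct and is essentially the paper's route: you use the Poisson congestion bounds $e^{-m}\le P_\sys\le\gamma(1-e^{-m})/m$, then a contradiction showing that an uncrowded system would draw a fixed positive fraction of high-motivation hats (as in the appendix's uniform-divergence argument), and finally $z^*(s)\ge k/(V_\C P_\C(s))\to\infty$ uniformly. The one difference is your Case 2, which the paper replaces with the one-line observation that every hat with $z_i<k/V_\C$ must target $\O$, so that $\pi_\O\ge\beta H_\B(k/V_\C)+(1-\beta)H_\G(k/V_\C)>0$ directly.
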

All hats for whom $z_i<k/V_\C$ target system $\O$ because their payoff from targeting $\C$ is at most $z_i\cdot V_\C-k<0$. Hence, the expected number of hats who target $\O$ increases without bound as the mean total number of hats increases, i.e., as $\lambda\to\infty$, and so the probability of any given hat winning system $\O$ vanishes, i.e., $P_\O(s_i|\a)\to 0$ for all $s_i$.
Then, unless $z_i^*(s_i)=\infty$,
\begin{equation}\label{zstarPc}
    z_i^*(s_i)\cdot P_\C(s_i|\a)\to k/V_\C > 0\quad\mbox{as}\quad \lambda\to\infty\ ,
\end{equation}
because $z^*_i(s_i)\cdot\big(V_\C\cdot P_\C(s_i|\a)- V_\O\cdot P_\O(s_i|\a)\big)= k$ by definition of $z_i^*$ (see \ref{itm:generalzstar:cutoff}).  As $\lambda\to\infty$, if $z^*_i(s_i)$ stays bounded above for a non-negligible set of $s_i$, the number of hats who have $z_i>z^*(s_i)$ and therefore target $\C$ (see \ref{itm:generalzstar:cutoff}) grows without bound, meaning that  each such hat's probability of winning $\C$ vanishes (i.e., $P_\C(s_i|\a)\to 0$), contradicting \eqref{zstarPc}. Hence, $z_i^*(s_i)\to\infty$ which, given \eqref{zstarPc}, further implies $P_\C(s_i|\a)\to 0$, establishing \cref{obs:biglambda}.\footnote{The argument holds for many standard distributions of $I$, including Poisson, which do not change strangely as the mean ($\lambda$) changes.\label{fn:poisson}}

\iffullversion\medskip\else\smallskip\fi

Thus, as $\lambda$ increases, both systems have a winner almost certainly (because the expected number of hats targeting each system grows without bound), more so for the open system since $z_i^*(s_i)\to\infty$. 
By part \ref{itm:thm:conditional-on-winner}, the winner is more likely to be a $\G$-hat in the open system $\O$ (conditional on $\O$ having a winner) than in the closed system $\C$ (conditional on $\C$ having a winner). Since the open system is more likely than the closed system to have a winner, it follows that the unconditional probability for a system to be won by a good hat is higher in the open system than in the closed system, proving part \ref{itm:thm:lambda}. 

On the other hand, although the closed system is more likely to be won by a $\B$-hat conditional there being a winner (that is, \ref{itm:thm:conditional-on-winner}), it is less likely to have a winner than the open system. Since both systems have a winner with probability converging to 1 as $\lambda$ increases, part \ref{itm:thm:lambda-and-tail} follows so long as the gap between the conditional probabilities that the winner is of \nature $\B$ for the open and closed systems does not vanish as $\lambda\to\infty$. This is guaranteed if \eqref{uppertail2} holds.  
\qqed
\end{proof}

\section{Generalized Model II: Many Systems} \label{sec:model-general-manysystems}

Next, we present a further generalized model that is the same as in \cref{sec:model-general}, except that there is a finite number of systems, with each system $\sys$ characterized by an intrinsic value $V_\sys >0$ and a cost $k_\sys\ge0$, and no two systems have identical characteristics $(V_\sys,k_\sys)$.\footnote{Our results hold with identical systems so long as they are treated equally by every hat, in particular, each player targets identical systems with the same probability.} Let $\OSet$ denote the set of open systems, i.e., with $k_\sys=0$, and  $\CSet$ the set of closed systems, i.e., with $k_\sys>0$, both nonempty.
Fix an \e strategy, represented by $\a:\BG\times S\times Z\ \to\ \Delta\big(\OSet\cup\CSet\cup\{\phi\}\big)$ adopted by all players.
Let $\Win_\sys^\eps$ be the probability that an $\eps$-hat wins $\sys$, and 
$\Win_\sys:=\Win_\sys^\B+\Win_\sys^\G$ the probability that $\sys$ has a winner. Define 
\[
 \WWin_{\syss}:= 
 \frac{\sum_{\sys\in\syss}V_\sys\cdot\Win_\sys^\B}
      {\sum_{\sys\in\syss}V_\sys\cdot \Win_\sys} \qquad\mbox{for}
 \qquad \syss\in\OCSet.
\]
\begin{theorem}\label{thm:main2}
    For every $\lambda>0$, there exists an \e in the multi-system generalized game described above. In every equilibrium in which some
closed system is targeted with positive probability,
\[
 \textup{(a)}\quad \WWin_{\CSet}>\beta>\WWin_{\OSet}.
\]
If \eqref{uppertail2} holds, there exists $\lambda_*<\infty$, 
such that, for every \e with $\lambda\geq\lambda_*$,
$\Win_\sys>0$ for all $\sys\in\OSet\cup\CSet$ and, for every $\O\in\OSet$ and $\C\in\CSet$,
\[
 \textup{(b)}\quad \Win_\O^\G>\Win_\C^\G,
 \qquad\quad
 \textup{(c)}\quad \Win_\C^\B>\Win_\O^\B, 
 \qquad\quad
 \textup{(d)}\quad \frac{\Win_\O^\G}{\Win_\O}>\frac{\Win_\C^\G}{\Win_\C}\ .
\]
 \end{theorem}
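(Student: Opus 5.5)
Existence will come from a fixed-point argument on the space of aggregate targeting statistics rather than on strategies themselves. By Poisson environmental equivalence, a hat's expected winning probability in system $\sys$ depends on $\a$ only through the Poisson intensity measure of skills of hats targeting $\sys$: the measure $\lambda\,\mu_\sys(ds)$, where $\mu_\sys(ds)=\int\!\!\int \a(\eps,s,z)(\sys)\,dH_\eps(z)\,d\beta(\eps)\,dF(s)$. The map $\mu\mapsto P_\sys(\cdot\,|\,\mu)$ is continuous in the weak topology. Given the functions $P_\sys$, each type best-responds by maximizing $z V_\sys P_\sys(s)-k_\sys$ over $\sys$, with $\phi$ yielding $0$. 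Because $z$ is atomless, ties occur only on a null set of $z$ for each $s$. Hence the induced aggregate $(\mu_\sys)_\sys$ is a single-valued, continuous function of $(P_\sys)_\sys$. The aggregates live in a convex, compact set of sub-probability measures on $S$ (one per system), so Schauder--Tychonoff, or Kakutani applied to the best-response correspondence, delivers a symmetric equilibrium for every $\lambda>0$.

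For (a), mirror the argument for \cref{thm:main}\ref{itm:thm:conditional-on-winner}. Fix $s$. Each type's payoff from $\sys$ is affine in $z$, with slope $V_\sys P_\sys(s)$ and intercept $-k_\sys$. The upper envelope is therefore convex piecewise linear, so as $z$ rises the chosen system moves through systems of weakly increasing slope. Open systems have intercept $0$ and beat every closed system at small $z$, so the open choice is made on an interval $[0,\bar z(s))$ and closed choices on $(\bar z(s),\infty)$. This cutoff is the same for $\B$- and $\G$-hats. By \eqref{fosd}, the $\B$-share among hats with skill $s$ who target closed systems exceeds $\beta$ whenever $\bar z(s)<\infty$, and the share among open targeters is below $\beta$. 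The winning probability depends only on skill, so aggregating the winners of each system with weights $V_\sys$ and integrating over $s$ preserves these inequalities. They are strict because the set of $s$ with $\bar z(s)<\infty$ has positive measure whenever some closed system is targeted with positive probability. The $V_\sys$-weighting is needed because winner composition differs across systems within $\CSet$ according to where each system sits on the $z$-ladder. Pooled shares, rather than system-by-system shares, are what the cutoff argument delivers.

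For (b)--(d), extend \cref{obs:biglambda}. As $\lambda\to\infty$, the set of hats with $z<\min_\C k_\C/V_\C$ goes open, so some open system receives unbounded mass. Indifference among open systems used by low-$z$ hats (slopes $V_\O P_\O$) forces every open system to be targeted, and $P_\O\to0$ uniformly. Every closed-system cutoff then diverges while $z^*P_\C\to k_\C/V_\C$, as in \eqref{zstarPc}. This gives $\Win_\sys\to1$ for open systems and $\Win_\sys>0$ for all systems. I also need to show each closed system is eventually targeted, using the fact that an untargeted closed system yields $zV_\C-k_\C>0$ for large $z$.

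The main obstacle is the quantitative step. I need the $\B$-share of winners in each individual closed $\C$ to stay bounded away from $\beta$, and the open systems' $\B$-share to converge to $\beta$, uniformly as $\lambda\to\infty$. For the open systems this holds because the cutoffs diverge, so the open targeters are essentially the whole population. For closed systems, the targeters of $\C$ have $z$ in an interval $[z_1(s),z_2(s))$ whose endpoints diverge. I will use \eqref{uppertail2} to bound the $\G$-to-$\B$ density ratio on that interval by $\chi$, so the $\B$-share among targeters of $\C$ is at least $\beta/(\beta+(1-\beta)\chi)>\beta$. Then (d) follows, and (b) and (c) follow by combining (d) with $\Win_\O\to1\ge\Win_\C$. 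The remaining care is to ensure $\Win_\C$ does not approach $1$ faster than the gap closes. Since the gap is bounded below by a constant, convergence $\Win_\O\to1$ suffices.
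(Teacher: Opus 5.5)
\textbf{The main gap is in (b)--(d).} You never show that every \emph{closed} system's load diverges, uniformly over equilibria. You prove that open loads diverge and then borrow \eqref{zstarPc}, but that argument was for a single closed system. With several closed systems, the high-$z$ hats can be absorbed by other closed systems. Nothing you say prevents some $\C$ from staying lightly loaded, with slope $A_\C(s)=V_\C P_\C(s)$ bounded away from zero. This breaks two of your steps.
\begin{itemize}
\item For (d): targeters of $\C$ satisfy only $z\ge k_\C/A_\C(s)$. Unless $\sup_s A_\C(s)\to0$, you cannot place all of them above the threshold in \eqref{uppertail2}, so the bound $\xi=\beta/(\beta+(1-\beta)\chi)$ is unavailable.
\item For (c): your closing claim that $\Win_\O\to1$ suffices is wrong. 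You have $\Win_\C^\B\ge\xi\Win_\C$ and $\Win_\O^\B\le\beta\Win_\O\to\beta$, so you need $\Win_\C>\beta/\xi$, i.e.\ $\Win_\C\to1$ uniformly. Showing that each closed system is ``eventually targeted'' ($\Win_\C>0$) is not enough. The fact that $zV_\C-k_\C>0$ for large $z$ does not even give that, because the relevant comparison is with the other systems' payoffs, not with zero.
\end{itemize}

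The paper closes this with a counting argument. Suppose $m_{\sys_0}\le L$ for some system. Then $A_{\sys_0}\ge\nu:=\min_\sys V_\sys e^{-L}$, so every hat with $z\ge z_0:=1+2\max_\sys k_\sys/\nu$ must choose a system with slope at least $\nu/2$. The congestion bound $P_\sys\le\gamma(1-e^{-m_\sys})/m_\sys$ caps each such system's load at $M=2\gamma\max_\sys V_\sys/\nu$. Hence $\lambda\Pr[z\ge z_0]\le|\OSet\cup\CSet|\,M$, which is impossible for large $\lambda$. This yields $\min_\sys\Win_\sys\to1$, $\max_\sys\sup_s A_\sys(s)\to0$ and $\inf_s z^*(s)\to\infty$, all uniformly over equilibria, which is exactly what (c) and (d) need.

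\textbf{Two further steps need repair.}

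\emph{Part (a).} ``The winning probability depends only on skill'' is false once you pool several closed systems. At fixed $s$, the system chosen, and hence $P_\sys(s)$, depends on $z$. First-order stochastic dominance is not preserved under truncation, so the $\B$-hats among closed targeters need not sit higher on the $z$-ladder than the $\G$-hats. A targeter share above $\beta$ therefore does not transfer to value-weighted winners. The fix is to avoid conditioning on targeting. The value-weighted winning rate $A^*_{\CSet}(s,z)=\sum_{\C\in\CSet}\rho_\C(s,z)V_\C P_\C(s)$ is zero below the cutoff and nondecreasing above it, by your own convex-envelope observation. So \eqref{fosd} applied over all of $z$ gives $\E_\B[A^*_{\CSet}]>\E_\G[A^*_{\CSet}]$, and the dual inequality holds on the open side. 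This is precisely why the $V_\sys$-weights appear.

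\emph{Existence.} ``Ties occur only on a null set of $z$ for each $s$'' fails for two systems with equal cost, for example any two open systems. At a skill where $V_\O P_\O(s)=V_{\O'}P_{\O'}(s)$, they tie for \emph{every} $z$. Your single-valued continuous map needs the set of such skills to be $F$-null, for every aggregate profile and not just in equilibrium. The paper gets this from real-analyticity of $x\mapsto V_\sys\E[1/(1+xW_\sys)]$ together with $V_\sys\ne V_{\sys'}$. Alternatively, you must actually carry out the Kakutani route with mixed responses at ties, verifying closed graph and convex values.
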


Parts (a) and (d) of \cref{thm:main2} modify \cref{thm:main}\ref{itm:thm:conditional-on-winner} as follows: (a) means
$\B$-hats are expected to win a strictly larger fraction of total system value won across closed systems, than across open systems; and if
\eqref{uppertail2} holds (d) extends \cref{thm:main}\ref{itm:thm:conditional-on-winner} in a pairwise system-to-system comparison.
Parts (b) and (c) of \cref{thm:main2} respectively extend \cref{thm:main}\ref{itm:thm:lambda}--\ref{itm:thm:lambda-and-tail} to hold for every pair of systems, one open and one closed; however, the condition \eqref{uppertail2} from \cref{thm:main}\ref{itm:thm:lambda-and-tail} is now also needed for (b).
\iffullversion

\fi
The proof of Theorem \ref{thm:main2} is in \cref{app:proof2}. 

\iffullversion
    \subsection{Further generalizations}

    Our models are further generalizable in a number of ways.
    
    \begin{remark}[Multiple winners \& extension over time]\label{rmk:multiple-winners}
The simplest conceptualization of our models is that each system has one vulnerability of expected value $V_\sys$. More realistically, systems may have any number of vulnerabilities, and $V_\sys$ can be thought to be the expected value of reportable or exploitable vulnerability/ies that will be found by the first finder in $\sys$ within the scope of a given game. Our models and results generalize naturally to cover multiple winners (i.e., first finders), who may find the same or different vulnerabilities in the same system at the same time, by adding a multiplicative factor $\eta>1$ to the weighted contest-success probability \eqref{p(s)} as follows:
\begin{equation}
p_i(s_1,\cdots,s_n)\ :=\ \eta\cdot\frac{w(s_i)}{\sum_{j=1}^nw(s_j)}\ \in\ (0,\eta]\ .
\label{p(s)-adjusted}
\end{equation}
An interpretation is that there are multiple vulnerabilities in the system, each with an expected intrinsic value of $V_\sys$, and $p_i(s_1,\cdots,s_n)$ is the expected number of vulnerabilities that hat $i$ would find before others.

Our models treat a system as ``out of play'' once it is won. In reality, of course, players may continue examining it and find more vulnerabilities. Searching for vulnerabilities in systems over an extended period of time can be modeled by a sequence of games like ours, with updated $V_\sys$ values each time a system is won. 
Given the progress made in the targeted system and the lowered intrinsic values of the system just won, the hats may continue to target the same system until all vulnerabilities are found in that system. Those who choose to change targets would choose a new target based on the same principle as in the initial game. Hence, we expect this would not change the qualitative results.
\end{remark}

\begin{remark}[Poisson distribution]
\label{rmk:poisson}
The Poisson assumption on the number of hats is not essential (see \cref{fn:poisson}). It mainly simplifies calculations and helps establish equilibrium existence. The key assumptions are higher costs to examine closed systems and systematically higher rewards from winning systems for bad hats. We intentionally keep good and had hats identical in all other respects, but significant differences in other respects may strengthen or weaken the results.
\end{remark}

\begin{remark}[Targeting multiple systems]\label{rmk:multiple-targets}
Suppose that each hat can target as many systems as they like. Then, all hats of all types target all open systems in \e, so every open system has a winner for sure, and a $\B$-winner with probability $\beta$. Each closed system $\sys$ is contested independently, so there is a decreasing threshold $z^*_\sys(s_i)$ such that the expected payoff from targeting $\sys$ is zero if $z_i= z^*_\sys(s_i)$. Hence, hats target $\sys$ if $z_i> z^*_\sys(s_i)$ but not if $z_i< z^*_\sys(s_i)$. Then, every closed system is won by $\B$ with probability higher than $\beta$ as $\lambda\to\infty$ and our results continue to hold.
\end{remark} 
    \cref{app:thm2-remarks} discusses how our models generalize to certain bounded $z_i$ (\cref{rmk:bound-zi}) and, under practical conditions, to open systems having nonzero entry costs (\cref{rmk:open-sys-entry-costs}).
\else
    \paragraph{\bf Generalizability} \cref{app:thm2-remarks} explains our models generalize: (1) to multiple winners per system over extended time periods; (2) beyond the Poisson distribution; (3) to players examining multiple systems at once; (4) to certain bounded $z_i$; and (5) under practical conditions, to open systems having nonzero entry costs.
\fi

\section{Discussion and Open Directions}\label{sec:discussion}

\paragraph{AI}
Recent advances in generative-AI-based cybersecurity tools have scaled up automated and AI-assisted vulnerability detection to unprecedented efficiency, as well as made vulnerability discovery accessible to a wider range of people \cite{cyberscoop-ai-vuln}. 
Human developers remain responsible for triaging results, verifying complex edge cases, and deploying stable fixes --- a manual workflow that is struggling to keep pace with automated discovery  \cite{cyberscoop-ai-vuln,github-ai-cybersec}. As a result, there is a growing backlog of unpatched security issues and an associated increase in exploitation risk \cite{Willison-rumour-bug,Madhavapeddy-rumour-bug}.

One obvious consequence may be that \emph{the number of hats grows}, as vulnerability discovery has become accessible to more people. 
Our explicit modeling treats what happens after a vulnerability is discovered as out of scope. However, the way we model the motivation and payoff of good hats depends intrinsically on hats' assumed motivations (see \cref{sec:model-overview}).
One possibility is that the utility gained by good hats who report their findings would go down due to the decreased likelihood of issues getting fixed.  Another conceivable outcome would be that bad hats' utility upon winning increases due to AI reducing the cost of operationalizing vulnerabilities (although this might be offset by lowered market prices due to increased supply). Either of these adjustments to our model parameters would only strengthen our results in their support of the established folklore.

An explicit modeling of the changes to vulnerability discovery incentives due to recent advances in AI cybersecurity tools, and likely corresponding security outcomes, may be interesting future work.

\paragraph{Other ``security by obscurity'' rationales}
Our work investigates one among several rationales against ``security by obscurity'' --- namely, that open scrutiny facilitates the finding and fixing of problems. Game-theoretic analyses of the other rationales would be interesting future work.

\section{Conclusion}\label{sec:conc}

We have conducted the first rigorous modeling of ``security by obscurity.'' Our game-theoretic analysis provides the first theoretical substantiation of the security community's longstanding experience that hiding how a system works is generally worse for security than inviting widespread scrutiny. Our work contributes to foundations of security economics, as well as practical pushback against industry narratives that draw upon the intuitive appeal of ``security by obscurity.''

\ifanon\else
    \paragraph{\bf Acknowledgments}
    We are grateful to Madiha Zahrah Choksi, Thibaut Horel, and Michael A. Specter for helpful discussion and references.
    
\fi

\clearpage
\printbibliography 

\appendix

\section{Deriving $P_\sys$}\label{app:Psys}

In this section, we derive an expression for $P_\sys$ as defined in \cref{sec:model-general}.
The following observation follows from \ref{itm:generalzstar:cutoff} and \ref{itm:generalzstar:infty} from the proof of \cref{thm:main}.

\begin{observation}\label{obs:strategy}
Any \e strategy $\a$ must have the  property that
any hat with skill $s_i$ will target $\O$ if $z_i<z^*(s_i)$ and will target $\C$ if $z_i>z^*(s_i)$.
\end{observation}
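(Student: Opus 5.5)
The claim to prove is \cref{obs:strategy}: in any \e, a hat with skill $s_i$ targets $\O$ if $z_i<z^*(s_i)$ and targets $\C$ if $z_i>z^*(s_i)$. The plan is to read it off from the best-response characterization \eqref{a(...)}, which holds in every \e. That characterization rests on two facts about the expected payoffs. First, $P_\O(s_i|\a)$ and $P_\C(s_i|\a)$ depend only on $s_i$ and on the common strategy $\a$ of the other hats, not on $(\eps_i,z_i)$. This is because winning probabilities in \eqref{p(s)} depend only on skills, and the Poisson environmental-equivalence property means that, from hat $i$'s perspective, the other hats are distributed the same way regardless of $i$'s own type.

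Second, every type strictly prefers $\O$ to $\phi$, so in \e the support of $\a(\eps_i,s_i,z_i)$ lies in $\{\O,\C\}$. Targeting $\O$ yields $z_iV_\O P_\O(s_i|\a)$, which is strictly positive whenever $z_i>0$. Here $P_\O>0$ because there is positive probability that nobody else targets $\O$, and positive weight against any finite rival set in any case. Since $Z=\Re_+$ is atomless, $z_i=0$ occurs with probability zero and can be ignored.

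Now fix $s_i$ and set $D(s_i):=V_\C P_\C(s_i|\a)-V_\O P_\O(s_i|\a)$. The payoff difference between $\C$ and $\O$ is $z_iD(s_i)-k$, which is affine in $z_i$ with slope $D(s_i)$. The argument splits into two cases.

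\emph{Case $D(s_i)>0$.} Define $z^*(s_i):=k/D(s_i)$. Then $z_iD(s_i)-k$ is strictly increasing in $z_i$ and vanishes exactly at $z^*(s_i)$. So \eqref{a(...)} forces support $\{\C\}$ when $z_i>z^*(s_i)$ and support $\{\O\}$ when $z_i<z^*(s_i)$, for both ethical types.

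\emph{Case $D(s_i)\le0$.} Then $z_iD(s_i)-k\le -k<0$ for every $z_i$, so every such hat targets $\O$, which is consistent with the convention $z^*(s_i)=\infty$.

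In both cases the behavior matches the statement. The only indifference point is $z_i=z^*(s_i)$, a single value that has probability zero under $H_\eps$, so mixing there is immaterial.

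The only step needing care is justifying that $P_\sys$ is independent of hat $i$'s own $(\eps_i,z_i)$ and strictly positive. Independence follows from symmetry of the \e and the Poisson structure. For positivity, I would note that with positive probability no rival targets $\sys$, in which case hat $i$ wins for sure. Everything else is direct from \eqref{a(...)}.
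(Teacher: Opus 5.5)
Your proposal is correct and follows the paper's route. The paper reads this observation directly off the two-case analysis of \eqref{a(...)}: a cutoff $z^*(s_i)$ when $V_\C P_\C(s_i|\a)>V_\O P_\O(s_i|\a)$, and $z^*(s_i)=\infty$ otherwise, which is exactly your $D(s_i)>0$ versus $D(s_i)\le 0$ split. Your remark that $z_i=0$, where $\phi$ ties with $\O$, is a null event is a small piece of extra care that the paper leaves implicit.
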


In particular, strategies are deterministic except when $z_i=z^*(s_i)$. Hence, the winning probability of any hat $i$ targeting a system $\sys$, given skill level $s_i=\tilde{s}$,~is:

\[
P_\sys\!(\tilde{s}|\a)\ :=\ \E_{I,\,(s_j)_{j\in[I]}} \left[\sum_{I=0}^\infty\left(\frac{w(\tilde{s})}{w(\tilde{s})+\sum_{j=1}^I \One_\sys(\theta_j)\cdot w(s_j)}\right)\right]
\]
where  the expectation is taken over the distribution of the types of the $I\sim \Poisson(\lambda)$ hats other than $i$\footnote{\label{fn:env eqv}In a game where the total number of players is $I\sim \Poisson(\lambda)$, each player in the game perceives the number of other players as $I\sim \Poisson(\lambda)$, a result called {\it environmental equivalence} due to Myerson \cite{myerson98}.}
and $\One_\sys$ is the indicator function of types that target $\sys$ according to $\a$ (e.g., $\One_\O(\theta_j)=1$ if $z_j\le z^*(s_j)$ and 0 if $z_j>z^*(s_j)$).\footnote{However players behave when $z_i=z^*(s_i)$, as long as they target some system (rather than none) and the conditions in \cref{obs:strategy} hold, the resulting strategy profile is an equilibriunm because (1) when equality holds, targeting either is optimal and (2) a player's behavior when the equality holds does not affect others' optimal behavior, since $z_i=z^*(s_i)$ occurs with measure-zero probability as $H_\B,H_\G$ are atomless. To make the indicator function well defined, we assume without loss of generality that players target $\O$ when the equality holds.}

\section{Alternative Model of Ethical Types}\label{app:alt-model}

As discussed in \cref{sec:model-overview}, while our main models assume players of two types, $\B$ and $\G$, that differ intrinsically, our modeling approach is also able to capture the alternative approach of modeling all players as choosing between ethically reporting or malicious exploiting their findings upon discovery, based on the expected utility from the two options. Here, we present this alternative approach in more detail, and explain how it is captured by our models.

\paragraph{\bf Alternative Model of $(\eps_i,z_i)$}
Each hat $i$, if they win any system $\sys$ of intrinsic value $V_\sys>0$, chooses whether to report their finding for social good or exploit it. The reward they derive from the two options are $r_\G\cdot V_\sys$ and $r_\B\cdot V_\sys$ respectively, where $r_\G$ and $r_\B$ are hat $i$'s private characteristics, i.i.d draws from atomless CDFs $L_\G$ and $L_\B$ with corresponding densities $\ell_\G$ and $\ell_\B$ on $\Re_+$, respectively. Then, upon winning any system, hat $i$ would report their finding if $r_\G>r_\B$ or exploit it if  $r_\G<r_\B$. (The cost $k$  for system $\C$ is sunk at this point, hence irrelevant.) Call hat $i$ a good hat if the former and a bad hat if the latter.  Define a random variable $z:=\max\{r_\G,r_\B\}$ and let $H_\eps$ be the CDF of $z$ conditional on $z=r_\eps$ for $\eps\in\BG$. Suppose
\begin{equation}
\frac{\ell_\B(r)/L_\B(r)}{\ell_\G(r)/L_\G(r)}\ \text{ is nondecreasing and nonconstant on} \ (0,\infty). \label{AltC}
\end{equation}
The lemma below shows that this alternative model is equivalent to the previous modelling from \cref{sec:model-general,sec:model-general-manysystems} of player's type $\theta_i=(\eps_i,s_i,z_i)\in\BG\times S\times Z$, where $s_i\sim F$ independently of $(\eps_i,z_i)$.

\begin{lemma}
In the Alternative Model above, a hat $i$ is bad ($\eps_i=\B$) with probability $\beta:= \Pr[r_\G<r_\B]$ and good ($\eps_i=\G$) with probability $1-\beta$, and derives a reward of $z_i\cdot V_\sys$ from winning system $\sys$ where $z_i\sim H_{\eps_i}$. Moreover, $H_\B$ and  $H_\G$ satisfy \eqref{fosd} and \eqref{uppertail2} from \cref{sec:model-general}.
In other words, the Alternative Model reduces to the models of players' types in \cref{sec:model-general,sec:model-general-manysystems}.
\end{lemma}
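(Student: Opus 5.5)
The plan is to compute everything explicitly from the joint law of $(r_\G,r_\B)$, which are independent with densities $\ell_\G,\ell_\B$. The first part of the lemma is essentially definitional. Hat $i$ is labelled $\B$ exactly when $r_\G<r_\B$, so $\Pr[\eps_i=\B]=\beta:=\Pr[r_\G<r_\B]$. Ties have probability zero because the CDFs are atomless. Upon winning $\sys$ the hat takes the better option, so its reward is $\max\{r_\G,r_\B\}\cdot V_\sys=z_i\cdot V_\sys$. Conditional on its type, $z_i\sim H_{\eps_i}$ by definition of $H_\eps$. Since $(r_\G,r_\B)$ are private draws unrelated to skill, $s_i\sim F$ remains independent of $(\eps_i,z_i)$. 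I would also note that $\beta\in(0,1)$, since both $\{r_\G<r_\B\}$ and $\{r_\G>r_\B\}$ have positive probability for full-support densities. All the substance is therefore in verifying \eqref{fosd} and \eqref{uppertail2}.

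Next I would write down the conditional densities. Since $\{z\le t,\ \eps=\B\}=\{r_\G<r_\B\le t\}$, we get
\[
H_\B(t)=\frac1\beta\int_0^t \ell_\B(r)L_\G(r)\,dr,\qquad H_\G(t)=\frac1{1-\beta}\int_0^t \ell_\G(r)L_\B(r)\,dr,
\]
so $h_\B=\ell_\B L_\G/\beta$ and $h_\G=\ell_\G L_\B/(1-\beta)$. The key observation is that the likelihood ratio is
\[
\frac{h_\B(r)}{h_\G(r)}=c\,\rho(r),\qquad c:=\frac{1-\beta}{\beta},\qquad \rho(r):=\frac{\ell_\B(r)/L_\B(r)}{\ell_\G(r)/L_\G(r)},
\]
and $\rho$ is nondecreasing and nonconstant by \eqref{AltC}. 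So the problem reduces to the standard fact that a weakly monotone, nonconstant likelihood ratio gives strict first-order stochastic dominance together with a thin upper tail.

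For \eqref{fosd}, fix $t>0$. Monotonicity of $\rho$ gives two bounds:
\[
H_\B(t)=\int_0^t c\rho\,h_\G\le c\rho(t)\,H_\G(t),\qquad 1-H_\B(t)=\int_t^\infty c\rho\,h_\G\ge c\rho(t)\,\big(1-H_G(t)\big).
\]
Chaining them yields
\[
\frac{H_\B(t)}{H_\G(t)}\le c\rho(t)\le\frac{1-H_\B(t)}{1-H_\G(t)},
\]
which forces $H_\B(t)\le H_\G(t)$. Strictness is the step needing care. If $H_\B(t)=H_\G(t)$, both outer ratios equal $1$, so both inequalities are tight. That means $\rho\equiv\rho(t)$ almost everywhere on $(0,t)$ and on $(t,\infty)$, so $\rho$ is constant, contradicting \eqref{AltC}. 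Full support ensures $H_\G(t)\in(0,1)$, so the ratios are well defined.

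For \eqref{uppertail2}, note that $h_\G/h_\B=1/(c\rho)$, so it suffices to find $z_0$ with $c\rho(z_0)>1$. Then for all $z\ge z_0$, monotonicity gives $h_\G(z)/h_\B(z)\le\chi:=1/(c\rho(z_0))<1$. Such a $z_0$ exists: otherwise $h_\B\le h_\G$ everywhere, and since both integrate to $1$ this forces $h_\B=h_\G$ almost everywhere. Then $c\rho\equiv1$, contradicting nonconstancy. I expect no real obstacle beyond the strictness and measure-zero bookkeeping in the third paragraph.
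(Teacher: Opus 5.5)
Your proposal is correct and follows essentially the paper's route. Both arguments compute $h_\B=L_\G\ell_\B/\beta$ and $h_\G=L_\B\ell_\G/(1-\beta)$, recognize $h_\B/h_\G$ as $\frac{1-\beta}{\beta}$ times the ratio in \eqref{AltC}, and use its monotonicity and nonconstancy to obtain \eqref{fosd} and \eqref{uppertail2}. The differences are cosmetic. You pivot on $c\rho(t)$, where the paper uses the convex-combination identity $H_\G(z)\,\E[R\mid x<z]+(1-H_\G(z))\,\E[R\mid x>z]=1$, and you spell out the strictness step that the paper only asserts.
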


\begin{proof}
We must prove \eqref{fosd} and \eqref{uppertail2}.     
We have $\beta = \int_0^\infty L_\G(r)\ell_\B(r)\,dr\in(0,1)$, and
\begin{eqnarray*}
&&H_\B(z)=\frac{1}{\beta}\int_0^z L_\G(r)\ell_\B(r)\,dr\quad\Longrightarrow\quad 
h_\B(z)=\frac{L_\G(z)\ell_\B(z)}{\beta}\ , \\
&&H_\G(z)=\frac{1}{1-\beta}\int_0^z L_\B(r)\ell_\G(r)\,dr\quad\Longrightarrow\quad 
h_\G(z)=\frac{L_\B(z)\ell_\G(z)}{1-\beta}\ .
\end{eqnarray*}
Defining
\begin{equation}
R(z):=\frac{h_\B(z)}{h_\G(z)}
=\frac{1-\beta}{\beta}\cdot\frac{L_\G(z)\ell_\B(z)}{L_\B(z)\ell_\G(z)}\ ,
\label{eq:R}
\end{equation}
 we have $\int_0^z R(r)\,dH_\G(r)=\int_0^z h_\B(r)\,dr=H_\B(z)$ and also for any $z>0$, 
\begin{equation}\begin{aligned}\label{eq:alt-model-convex-combo}    
H_\G(z)\cdot\E\big[R(x)\big|x\sim H_\G,\, x<z\big] +(1-H_\G(z))\cdot\E\big[R(x)\big|x\sim H_\G,\, x>z\big] \\
\ =\ \int_0^\infty R(r)\,dH_\G(r)\ = H_\B(\infty)\,=\,1\,.\hskip77pt
\end{aligned}\end{equation}
Hence, this convex combination of the two expectations in \eqref{eq:alt-model-convex-combo} is 1. 
Since $R(r)$ is nondecreasing and nonconstant by \eqref{AltC}, it follows that
\[
\E\big[R(x)\big|x\sim H_\G,\, x<z\big]\ <\ 1\ <\ \E\big[R(x)\big|x\sim H_\G,\, x>z\big]\quad\ \forall z>0\ .
\]
Consequently, 
\[
H_\B(z)=\int_0^z R(r)\,dH_\G(r)
= H_\G(z)\cdot\E\big[R(x)\big|x\sim H_\G,\, x<z\big]\,<\,H_\G(z)\quad\ \forall z>0\ ,
\]
proving \eqref{fosd}, and $\lim_{z\to\infty}R(z) >\E\big[R(x)\big|x\sim H_\G,\, x>z\big]>1$, proving \eqref{uppertail2}. \qqed
\end{proof}

\section{Proof of \cref{thm:main} and \cref{thm:main2}}\label{app:proof2}

We restate \cref{thm:main2} here for easy reference.
The proof of \cref{thm:main2} below also proves \cref{thm:main}.

\begin{customthm}{\ref{thm:main2}}
    For every $\lambda>0$, there exists an \e in the multi-system generalized game described above. In every equilibrium in which some
closed system is targeted with positive probability,
\[
 \textup{(a)}\quad \WWin_{\CSet}>\beta>\WWin_{\OSet}.
\]
If \eqref{uppertail2} holds, there exists $\lambda_*<\infty$, 
such that, for every \e with $\lambda\geq\lambda_*$,
$\Win_\sys>0$ for all $\sys\in\OSet\cup\CSet$ and, for every $\O\in\OSet$ and $\C\in\CSet$,
\[
 \textup{(b)}\quad \Win_\O^\G>\Win_\C^\G,
 \qquad\quad
 \textup{(c)}\quad \Win_\C^\B>\Win_\O^\B, 
 \qquad\quad
 \textup{(d)}\quad \frac{\Win_\O^\G}{\Win_\O}>\frac{\Win_\C^\G}{\Win_\C}\ .
\]
 \end{customthm}

\begin{proof}
Fix an equilibrium strategy $\a:\Theta\to \Delta\big(\OSet\cup\CSet\cup\{\phi\}\big)$, adopted by all hats. Given the distribution of $\theta$ and the number of hats $I\sim\Poisson(\lambda)$, $\a$ determines, for each system $T$, the distribution of $W_\sys$, the total weight of
other players targeting $T$. Thus, targeting a system $T$, any given hat with skill $s\in S$ wins it with probability 
\begin{equation*}
 P_\sys(s)\ :=\ \mathbb E\!\left[\frac{w(s)}{w(s)+W_\sys}\right]\ \in(0,1]
 \label{eq:v7-payoffs*}
\end{equation*}
and obtains, depending on the hat's type $\theta=(\eps,s,z)$, an expected payoff of 
\begin{equation*}
U_\sys(s,z|\a)\ :=\ zA_\sys(s)-k_\sys\quad\text{where}\quad A_\sys(s):=V_\sys P_\sys(s).
 \label{}
\end{equation*}
Since $U_\sys(s,z|\a)>0$ for any $T\in\OSet$, every hat of every type targets one system that maximizes $U_\sys(s,z|\a)$, called an ``optimal'' system. Since $U_\sys(s,z|\a)$ is independent of $\eps$, so is the set of optimal systems, defined as $\syss^*(s,z)=\arg\max U_\sys(s,z|\a)$. 

We prove (a)--(d) assuming that an \e exists, then prove existence.

\medskip
\underbar{Proving that $\syss^*(s,z)$ is a singleton for a.e.~$(s,z)$ and a cutoff $z^*(s)$ exists.}

We first show that there is a unique optimal system to target for almost everywhere\footnote{``Almost everywhere'' means everywhere except for a measure-zero set.} (henceforth, ``a.e.'') $(s,z)\in S\times Z$. Define the auxiliary functions
\begin{equation*}
 G_\sys(x):=V_\sys\cdot \E\!\left[\frac1{1+xW_\sys}\right],\quad x\geq0,
 \qquad \text{so that}\quad A_\sys(s)=G_\sys\Big(\frac{1}{w(s)}\Big)\ .
 \label{eq:v7-reciprocal}
\end{equation*}
Set $x_0=1/\underline w$ where $\underline w=\min_S w(s)>0$. Expanding by the geometric-series identity and taking expectations term by term (by bounded convergence) gives the power-series representation:
\[
 G_\sys(x)=V_\sys\cdot\sum_{n=0}^{\infty}(-1)^n
 \cdot \E\!\left[\frac{(W_\sys)^n}{(1+x_0W_\sys)^{n+1}}\right](x-x_0)^n,
 \qquad |x-x_0|<x_0\ .
\]
Indeed, $W_\sys/(1+x_0W_\sys)\leq1/x_0$ bounds the absolute value of the
$n$th summand before expectation by
$V_\sys(|x-x_0|/x_0)^n$, permitting termwise expectation.
This single convergent power series represents $G_\sys$ on the open interval
$(0,2x_0)$, which contains every relevant reciprocal weight
$x\in[1/\overline w,1/\underline w]$ where $\overline w=\max_S w(s)$.
Since $W_\sys<\infty$ almost surely, bounded convergence gives
$\lim_{x\downarrow0}G_\sys(x)=V_\sys$. 

Consider any two distinct systems $\sys$ and $\sys'$. If $k_\sys\ne k_{\sys'}$, the payoff difference
$z\cdot(A_\sys(s)-A_{\sys'}(s))-(k_\sys-k_{\sys'})$ is an affine function of $z$ with a
nonzero intercept, so it vanishes at at most one motivation at each skill. 
If $k_\sys=k_{\sys'}$ then $\lim_{u\downarrow0}(G_\sys-G_{\sys'})(u)=V_\sys-V_{\sys'}\ne0$ because $V_\sys\ne V_{\sys'}$, so the analytic function $G_\sys-G_{\sys'}$ is not
identically zero. Its zeros are consequently isolated,\footnote{A real-analytic function on an open interval that is not identically zero has isolated zeros \cite[Corollary 1.2.7 (identity theorem)]{KrantzParks02}.}
and hence finite in the compact interval
$[1/\overline w,1/\underline w]$.
The injectivity of $w$ makes ties of $A_\sys(s)$ (i.e., $A_\sys(s)=A_{\sys'}(s)$)
possible at only finitely many skills across all system pairs, an
$F$-null set. 
Recall that definitionally, $A_\sys(s)$ is the slope of $U_\sys(s,z|\a)$.
At every other skill, any two payoff lines $zA_\sys(s)-k_\sys$
intersect at most one motivation level $z$. Thus the possible motivation levels for which a tie arises
form a finite set at each such skill. The atomlessness of both $H_\eps$ gives
this set probability zero; integrating over $F$ proves that $\syss^*(s,z)$ is a singleton a.e.~$(s,z)$, giving the same unique optimal target for both $\eps\in\BG$ at a given $(s,z)$.

Write $\rho_\sys(s,z)=1$ if the hat targets $\sys$ and $0$ otherwise. The best open
system has slope $A_{\OSet}(s)=\max_{O\in\OSet}A_\O(s)>0$.
A closed system is targeted only if
$zA_\C(s)-k_\C \ge zA_{\OSet}(s)$, hence only if $A_\C(s>A_{\OSet}(s)$ and
$z\ge k_\C/(A_\C(s)-A_{\OSet}(s))$.  Thus, the  cutoff motivation level above which
some closed system is targeted is
\begin{equation*}
 z^*(s):=\min_{\{C\in\CSet:\,A_\C(s)>A_{\OSet}(s)\}}
             \frac{k_\C}{A_\C(s)-A_{\OSet}(s)}\ \ge\ \frac{\min_\C k_\C}{\max_\sys V_\sys}\ >\ 0
 \label{eq:v7-cutoff}
\end{equation*}
if $A_\C(s)>A_{\OSet}(s)$ for some $c\in\CSet$ and $z^*(s):=\infty$ otherwise; 
and hats target an open system if $z<z^*(s)$.
The winning probabilities $P_\sys(s)$, hence the \e conditions, are intact however the hats mix among optimal targets when $\syss^*(s,z)$ is not a singleton, because this happens with probability zero. Hence, we only consider $(s,z)$ for which $\syss^*(s,z)$ is a singleton below, unless stated otherwise.

\medskip
\underbar{Proving part (a)} 

The Poisson {\it add-one formula}, a special case of the Mecke
equation,\footnote{See Last and Penrose \cite[Theorem 4.1,
equation (4.2), p.~27]{LastPenrose17}.}
equates the expected sum of players' contributions to $\lambda$ times
the expected contribution of an independently sampled additional player
facing the original Poisson population. Applying this to a player's
contribution to the probability of winning a system $\sys$, for each $\eps\in\BG$,
\begin{equation}
 \Win_\sys ^\eps=\lambda \beta_\eps\int_S\int_0^\infty
       \rho_\sys (s,z)P_\sys(s)\,dH_\eps(z)\,dF(s).
 \label{eq:v7-winner-identity}
\end{equation}
Here $\beta_\B=\beta$ and $\beta_\G=1-\beta$. For open systems $O\in\OSet$, 
\begin{equation}
\Win_\O^\eps=\lambda \beta_\eps\int_S P_\O(s)H_\eps(z^*(s))\,dF(s)
 \quad\Longrightarrow\quad  \Win_\O^\B\leq \beta\cdot \Win_\O
 \label{eq:v7-open-bound}
\end{equation}
because, given $H_\B< H_\G$, the integral for $\eps=\B$ is smaller than the integral for $\G$, so
$\Win_\O^\B/\beta\leq\Win_\O^\G/(1-\beta)$, which rearranges to the inequality in \eqref{eq:v7-open-bound}.

Define $A^*(s,z):=\sum_{\sys\in\OCSet}\rho_\sys (s,z)A_\sys (s)$ which captures the slope of $U_\sys (s,z|\a)$ for the targeted system, which is the slope of the upper envelope of $U_\sys (s,z|\a)$ across all systems. Restricting to $z<z^*(s)$ where an open system is targeted, we define $A^*_{\OSet}(s,z):=A_{\OSet}(s)\One_{\{z<z^*(s)\}}$, while restricting $z>z^*(s)$ we define $A^*_{\CSet}(s,z):=\sum_{\sys\in \CSet}\rho_\sys (s,z)A_\sys (s)$ which is a nondecreasing step function, zero below $z^*(s)$
and positive above.
Let $\mathbb E_\eps$ average over $z\sim H_\eps$ at fixed $s$.
Strict stochastic dominance gives
$\mathbb E_\B[A^*_{\CSet}(s,z)]\geq\mathbb E_\G[A^*_{\CSet}(s,z)]$ and
$\mathbb E_\B[A^*_{\OSet}(s,z)]\leq\mathbb E_\G[A^*_{\OSet}(s,z)]$, 
strictly for both when $z^*(s)<\infty$.
Equation \eqref{eq:v7-winner-identity} gives
\[
 \sum_{\sys\in\syss}V_\sys \Win_\sys ^\eps
 =\lambda \beta_\eps\int_S\mathbb E_\eps\big[A^*_{\syss}(s,z)\big]\,dF(s),\qquad {\syss}\in\OCSet
\]
so normalizing by 
$$\sum_{\sys\in\syss}V_\sys \Win_\sys=\sum_{\sys\in\syss}V_\sys \Win_\sys ^\B+\sum_{\sys\in\syss}V_\sys \Win_\sys ^\G\ ,$$
\[
\WWin_{\CSet}
= \frac{\beta\int_S\mathbb E_\B\big[A^*_{\CSet}(s,z)\big]\,dF(s)}{\beta\int_S\mathbb E_\B\big[A^*_{\CSet}(s,z)\big]\,dF(s)+(1-\beta)\int_S\mathbb E_\G\big[A^*_{\CSet}(s,z)\big]\,dF(s)} > \beta
\]
when $\syss=\C$ and the reverse inequality for $\WWin_{\OSet}$ when $\syss=\O$, proving part~\textup{(a)}. This also proves part \ref{itm:thm:conditional-on-winner} of \cref{thm:main} because it is equivalent to part (a) of \cref{thm:main2} when there are one open and one closed system.  

\medskip
\underbar{Proving parts (b)--(d)}\label{pf:main2:b-d}

Let $m_\sys $ be the mean number of hats targeting $\sys$. Poisson thinning~\cite{Kingman92} gives
$N_\sys \sim\mathrm{Poisson}(m_\sys )$ other hats targeting $\sys$ and $\Win_\sys =1-e^{-m_\sys }$. Moreover,
\begin{equation}
 e^{-m_\sys }\leq P_\sys (s)\leq
 \gamma\cdot\mathbb E\!\left[\frac1{1+N_\sys }\right]
 =\gamma\cdot\frac{1-e^{-m_\sys }}{m_\sys }\quad\text{where}\quad \gamma:=\frac{\overline{w}}{\underline{w}},
 \label{eq:v7-congestion-bound}
\end{equation}
with $(1-e^{-m_\sys })/m_\sys =1$ at $m_\sys =0$. The lower bound $e^{-m_\sys }$ counts the event of a player targeting $\sys$ alone, on which the player wins surely; conditional on $N_\sys =n$,
the player's weight share is at most $\gamma/(1+n)$, proving the upper bound.

We show that every $m_\sys$ diverges uniformly over equilibria: that is, for any $L\in\Re_+$, there is a threshold $\lambda_L$ such that if $\lambda>\lambda_L$, then $m_\sys>L$ for every $\sys$ in every equilibrium.
If, to the contrary, $m_{\sys_0}\leq L$ for some system $\sys_0$ and $L>0$ for all $\lambda>0$,
then $A_{\sys_0}(s)\geq \nu:=\min_\sys  V_\sys  e^{-L}$ at every skill. 
Set $z_0=1+2(\max_\sys  k_\sys )/\nu$ and $M=2\gamma (\max_\sys V_\sys )/\nu$. Every player with $z\geq z_0$ can obtain payoff 
\[
U_{\sys_0}(s,z|\a)=zA_{\sys_0}(s) -k_{\sys_0} \ge z \nu-\nu\frac{z_0-1}{2}\ >\ \frac{z \nu}{2}
\]
by targeting $\sys_0$, so their targeted system must have $A_\sys (s)\geq \nu/2$.
By \eqref{eq:v7-congestion-bound}, that system has $m_\sys \leq M$.\footnote{To be precise, $m_\sys\le M+W(-Me^{-M})< M$ where $W(x)$ is the Lamber $W$ function, the inverse function of $f(w)=we^w$, hence the the last inequality follows from $-Me^{-M}<0$.}
Consequently, 
\[
 \lambda\Pr[z\geq z_0]\
 \leq\sum_{\sys:\,m_\sys \leq M}m_\sys \leq |\OSet\cup\CSet|\cdot M
\]
where $\lambda\Pr[z\geq z_0]$ is the expected number of hats with $z\ge z_0$. Since $\Pr[z\geq z_0]>0$ and both $\Pr[z\geq z_0]$ and $M$ depend only on primitives and $L$, so
$m_{\sys_0}\leq L$ is impossible if $\lambda>|\OSet\cup\CSet|\cdot M/\Pr[z\geq z_0]$ in
every equilibrium. This proves the asserted uniform divergence.
Equation \eqref{eq:v7-congestion-bound} and the cutoff formula then give,
uniformly over equilibria, as $\lambda\to\infty$,
\begin{equation}
 \min_\sys \Win_\sys \rightarrow1,\quad
 \varepsilon_\lambda:=\max_\sys \sup_{s\in S}A_\sys (s)\rightarrow0,\quad
 \inf_{s\in S}z^*(s)\geq\frac{\min_{\C\in\CSet} k_\C}{\varepsilon_\lambda}
 \rightarrow\infty.
 \label{eq:v7-uniform-limits}
\end{equation}

For \cref{thm:main}, i.e., when there is one open and one closed system, $\Win_\O>\Win_\C\to1$ as $\lambda\to\infty$ because $z^*(s)\to\infty$. For the closed system, \eqref{fosd} and \eqref{uppertail2} imply $\Win^\G_\C\leq(1-\beta)\cdot\Win_\C$. Together with \eqref{eq:v7-open-bound}, which is equivalent to $\Win^\G_\O\geq(1-\beta)\cdot \Win_\O$, we deduce that $\Win^\G_\O>\Win^\G_\C$ for $\lambda$ large enough.

\medskip
For \cref{thm:main2}, when there are multiple systems, by \eqref{eq:v7-uniform-limits} eventually every system is targeted with positive probability and every
player targeting a closed system has $z\geq\widehat z$. 
Applying \eqref{uppertail2} to \eqref{eq:v7-winner-identity} for closed systems gives
\[
 \Win_\C^\G\leq\frac{\chi(1-\beta)}{\beta}\Win_\C^\B,
 \qquad \frac{\Win_\C^\B}{\Win_\C}\geq
 \xi:=\frac{\beta}{\beta+\chi(1-\beta)}>\beta.
\]
Together with \eqref{eq:v7-open-bound}, this proves
$\Win_\O^\G/\Win_\O\geq1-\beta>1-\xi\geq\Win_\C^\G/\Win_\C$, hence \textup{(d)}.
Also $\Win_\O^\G\geq(1-\beta)\cdot\Win_\O$, $\Win_\C^\G\leq1-\xi$,
$\Win_\C^\B\geq\xi\cdot\Win_\C$, and $\Win_\O^\B\leq \beta$.
Both $(1-\xi)/(1-\beta)$ and $\beta/\xi$ are below one, so
\eqref{eq:v7-uniform-limits} eventually gives
$\min_\sys  \Win_\sys >\max\{(1-\xi)/(1-\beta),\,\beta/\xi\}$.
This proves \textup{(b)} and \textup{(c)}.

\medskip
\underbar{Proving existence of equilibrium}

Define $H:=\beta\cdot H_\B+(1-\beta)\cdot H_\G$ and $J=|\OSet\cup\CSet|$. Recall $s\sim F$ on $S=[0,1]$.

\paragraph{1. Cumulative targeting probabilities}
For an arbitrary strategy $\a$, let $\mu_\sys(x)$ denote the probability that a hat has skill at
most $x$ and targets $\sys$. 
Equip $C(S)^J$ with $\|\mu\|:=\max_\sys\sup_{s\in S}|\mu_\sys(s)|$ where $C(S)$ is the vector space of all continuous real-valued functions on $S$, and define
\begin{equation}
 \mathcal K=\left\{\mu\in C(S)^J:
 \mu_\sys\ge0,\quad\mu_\sys\text{ is nondecreasing},\quad
 \sum_\sys\mu_\sys=F\right\}\,.
 \label{eq:kfg-domain}
\end{equation}

Since $F(0)=0$, every $\mu_\sys(0)=0$. For $x\le y$,
\begin{equation}
 0\le\mu_\sys(y)-\mu_\sys(x) \le F(y)-F(x)\ . \label{eq:kfg-modulus}
\end{equation}
Thus every $\mu_\sys$ takes values in $[0,1]$ and has modulus of continuity bounded above by that of $F$, and hence form an equicontinuous family.
So, $\mathcal K$ is compact by
Arzel\`a--Ascoli \cite[Theorem~A5]{Rudin1991}.\footnote{Arzel\`a--Ascoli gives relative compactness of $\mathcal K$ in $C(S)^J$, i.e., every sequence $\mu^n$ in $\mathcal K$ has a uniformly convergent subsequence with the limit in $C(S)^J$. The limit is in $\mathcal K$ because the convergence is uniform under the given norm, giving compactness of~$\mathcal K$.}
$\mathcal K$ is convex and nonempty
($\mu_\sys=F/J$ is feasible). 

\paragraph{2. Payoffs and best responses}
For $\mu\in\mathcal K$, let $W_\sys^\mu$ be the sum of $w(s_j)$ of hats $j$ who target system $\sys$ according to $\mu_\sys$. It is finite
 almost surely. As in the earlier payoff definition, set
\begin{equation}
 P_\sys^\mu(s):=\mathbb E\!\left[\frac{w(s)}{w(s)+W_\sys^\mu}\right],\quad
 A_\sys^\mu(s):=V_\sys P_\sys^\mu(s),\quad
 U_\sys^\mu(s,z):=zA_\sys^\mu(s)-k_\sys\ .
 \label{eq:kfg-payoffs}
\end{equation}
Since $\a$ generates $\mu$, we have $U_\sys^\mu(s,z)= U_\sys(s,z|\a)$. 
Hence, by the same argument as before,  $\syss^*(s,z|\mu)=\arg\max U_\sys^\mu(s,z)$
is a singleton for a.e.~$(s,z)$ and a cutoff $z^*(s)$ exists for each $s$ such that
the (unique) optimal target is an open system if $z<z^*(s)$ and a closed system if $z>z^*(s)$.

To prove continuity of $P_\sys^\mu$ in $\mu$, set $g_t(v)=1-e^{-t w(v)}$. The Poisson
Laplace formula and Tonelli's theorem give
\cite[Theorem~3.9]{LastPenrose17} 
\begin{equation}
 P_\sys^\mu(s)=\int_0^\infty w(s)e^{-t w(s)}
 \exp\!\left\{-\lambda\int_Sg_t(v)\,d\mu_\sys(v)\right\}dt\ .
 \label{eq:kfg-laplace}
\end{equation}
Stieltjes integration by parts bounds the change in the inner integral
by $2\|\mu_\sys-\nu_\sys\|_\infty$, since $g_t$ increases within $[0,1]$.
Using $|e^{-a}-e^{-b}|\le|a-b|$ for $a,b\ge0$ in
\eqref{eq:kfg-laplace} yields
\begin{equation}
 \|P_\sys^\mu-P_\sys^\nu\|_\infty
 \le2\lambda\|\mu_\sys-\nu_\sys\|_\infty\ ,
 \label{eq:kfg-payoff-bound}
\end{equation}
proving continuity of $P_\sys^\mu$, hence that of $A_\sys^\mu(s)$ and $U_\sys^\mu(s,z)$.

\paragraph{3. The induced map and a fixed point}
Define the cumulative probabilities that system $\sys$ is targeted, induced by $\mu$:
\begin{equation}
 \Psi_\sys(\mu)(x)=\int_{[0,x]}\int_0^\infty
 \mathbf1_{\{\sys\in \syss^*(s,z|\mu)\}}\,dH(z)\,dF(s)\ .
 \label{eq:kfg-map}
\end{equation}
These are probabilities, so no factor $\lambda$ appears.
The coordinates are nonnegative and nondecreasing, sum to $F$, and
satisfy \eqref{eq:kfg-modulus}. Thus $\Psi(\mu)\in\mathcal K$.
If $\mu^n\to\mu$ uniformly, \eqref{eq:kfg-payoff-bound} gives payoff
convergence at every fixed type. For a.e. $(s,z)$ for which $\syss^*(s,z|\mu)$ is a singleton, $\syss^*(s,z|\mu^n)=\syss^*(s,z|\mu)$ for large enough $n$ due to continuity of expected utility from targeting each system. Hence
\begin{equation}
 \|\Psi(\mu^n)-\Psi(\mu)\|
 \le\int_S\int_0^\infty
 \mathbf1_{\{\syss^*(s,z|\mu^n)\ne \syss^*(s,z|\mu)\}}\,dH(z)\,dF(s)
 \rightarrow0.
 \label{eq:kfg-map-continuity}
\end{equation}
This proves continuity of $\Psi$ in the uniform norm.

Next, we apply the Kakutani--Fan--Glicksberg theorem
\cite[Corollary~17.55]{AliprantisBorder2006} to
$\Gamma:\mathcal K\to2^{\mathcal K}$ defined by
$\Gamma(\mu):=\{\Psi(\mu)\}$, where $2^{\mathcal K}$ denotes the power set of $\cal K$. $\Gamma$ satisfies the conditions of Kakutani--Fan--Glicksberg theorem because $\Psi$ is a continuous function,\footnote{Since $C(S)^J$ is a normed vector space, it is
Hausdorff and locally convex.} so it follows that $\mu^*\in\Gamma(\mu^*)$ for some $\mu^*\in\mathcal K$, or
equivalently, $\mu^*=\Psi(\mu^*)$.

\paragraph{4. From the fixed point to an \e}
For each $\eps\in\BG$, consider 
\begin{equation}
 \a^*(\eps,s,z)\ \in\ \Delta\big(\syss^*(s,z|\mu^*)\big)\ . \label{eq:kfg-strategy}
\end{equation}
The cumulative targeting probabilities generated by $\a^*$ are
\begin{equation}
 \Pr\{s_i\le x,\ \sigma_{\eps_i}^*(s_i,z_i)=\sys\}
 =\Psi_\sys(\mu^*)(x)=\mu_\sys^*(x)\ .
 \label{eq:kfg-consistency}
\end{equation}
Hence, it constitutes an equilibrium for all hats to adopt $\a^*$.
\qqed
\end{proof}

\section{\iffullversion{Additional }\fi{Remarks} on Generalizability}
\label{app:thm2-remarks}

\iffullversion\else\begin{remark}[Multiple winners \& extension over time]\label{rmk:multiple-winners}
The simplest conceptualization of our models is that each system has one vulnerability of expected value $V_\sys$. More realistically, systems may have any number of vulnerabilities, and $V_\sys$ can be thought to be the expected value of reportable or exploitable vulnerability/ies that will be found by the first finder in $\sys$ within the scope of a given game. Our models and results generalize naturally to cover multiple winners (i.e., first finders), who may find the same or different vulnerabilities in the same system at the same time, by adding a multiplicative factor $\eta>1$ to the weighted contest-success probability \eqref{p(s)} as follows:
\begin{equation}
p_i(s_1,\cdots,s_n)\ :=\ \eta\cdot\frac{w(s_i)}{\sum_{j=1}^nw(s_j)}\ \in\ (0,\eta]\ .
\label{p(s)-adjusted}
\end{equation}
An interpretation is that there are multiple vulnerabilities in the system, each with an expected intrinsic value of $V_\sys$, and $p_i(s_1,\cdots,s_n)$ is the expected number of vulnerabilities that hat $i$ would find before others.

Our models treat a system as ``out of play'' once it is won. In reality, of course, players may continue examining it and find more vulnerabilities. Searching for vulnerabilities in systems over an extended period of time can be modeled by a sequence of games like ours, with updated $V_\sys$ values each time a system is won. 
Given the progress made in the targeted system and the lowered intrinsic values of the system just won, the hats may continue to target the same system until all vulnerabilities are found in that system. Those who choose to change targets would choose a new target based on the same principle as in the initial game. Hence, we expect this would not change the qualitative results.
\end{remark}

\begin{remark}[Poisson distribution]
\label{rmk:poisson}
The Poisson assumption on the number of hats is not essential (see \cref{fn:poisson}). It mainly simplifies calculations and helps establish equilibrium existence. The key assumptions are higher costs to examine closed systems and systematically higher rewards from winning systems for bad hats. We intentionally keep good and had hats identical in all other respects, but significant differences in other respects may strengthen or weaken the results.
\end{remark}

\begin{remark}[Targeting multiple systems]\label{rmk:multiple-targets}
Suppose that each hat can target as many systems as they like. Then, all hats of all types target all open systems in \e, so every open system has a winner for sure, and a $\B$-winner with probability $\beta$. Each closed system $\sys$ is contested independently, so there is a decreasing threshold $z^*_\sys(s_i)$ such that the expected payoff from targeting $\sys$ is zero if $z_i= z^*_\sys(s_i)$. Hence, hats target $\sys$ if $z_i> z^*_\sys(s_i)$ but not if $z_i< z^*_\sys(s_i)$. Then, every closed system is won by $\B$ with probability higher than $\beta$ as $\lambda\to\infty$ and our results continue to hold.
\end{remark} \fi

\begin{remark}[Upper bound of $z_i$]\label{rmk:bound-zi}
\cref{thm:main2} extends to atomless motivation distributions with common full support $[0,\bar z]$, if \eqref{fosd} and \eqref{uppertail2} hold on $[0,\bar z]$ and 
\begin{equation}
\bar z\ >\ \max\Bigg\{\widehat z,\,\frac{1}{(1-\beta)(1-\chi)}\cdot\max_{\C\in\CSet}\frac{k_\C}{V_\C} \Bigg\}\ . \label{app:boundz}
\end{equation}

In the proof of \cref{thm:main2},
the arguments for unique choices, cutoffs, the add-one formula, and the weighted average comparison between open systems and closed systems are unaffected, so part (a) remains valid. The uniform congestion argument which led to \eqref{eq:v7-uniform-limits} requires modification. Closed-system loads $m_\C$ (as defined in the \hyperref[pf:main2:b-d]{proof of (b)--(d)} of \cref{thm:main2}) remain bounded:
$m_\C\le\gamma\bar zV_\C/k_\C$ since a larger load would make targeting $\C$ unprofitable even at the highest motivation. 
In contrast, every open-system load $m_\O$ diverges as $\lambda\to\infty$ by the same reasoning as before (i.e., all hats with $z_i < \min_{\C\in\CSet}k_\C/V_\C$ target open systems).
Thus, each open system gets targeted with probability tending to one and open-system payoffs tend to zero.

The bounded total load at closed systems then forces $\sup_s(\bar zA_\C(s)-k_\C)\to0$ for every closed system. Otherwise, uniform continuity of $w(s)$ and full support of the type distributions would give a fixed positive fraction of players a strictly positive closed-system payoff, contradicting bounded closed-system loads as open-system payoffs vanish. Consequently, closed systems are targeted by hats with motivations approaching $\bar z$, so the density condition \eqref{uppertail2} applies to all closed systems eventually.  The individual comparisons argument (in the \hyperref[pf:main2:b-d]{proof of (b)--(d)} of \cref{thm:main2}) therefore still gives a conditional $\B$-share of at least $\xi>\beta$ for closed systems, proving (d) given \eqref{eq:v7-open-bound}. 
The lower bound of \eqref{eq:v7-congestion-bound} implies $A_\C(s)\ge V_\C e^{-m_\C}$, hence $\Win_\C\ge1-k_\C/(\bar zV_\C)$. \eqref{app:boundz} ensures $\xi(1-k_\C/(\bar zV_\C))>\beta$, proving (b) given $\Win_\O^\B\leq \beta$. Then, recalling from the previous paragraph that each open system is targeted with probability tending to one, (c) follows.
\end{remark}

\begin{remark}[Open systems' entry costs]\label{rmk:open-sys-entry-costs} 
The core arguments underlying our results are that (i) more financially motivated hats (who tend to be $\B$) are more likely to target closed systems despite high cost, so the conditional probability that the winner is $\B$ given there is a winner is strictly higher for closed systems, and (ii) the relevant condition (i.e., systems have a winner) is met with a rapidly increasing probability, converging to 1 as $\lambda\to\infty$. Although formally presented as a limit outcome, our results tend to hold for reasonable values of $\lambda$ because the effect (ii) is already strong enough, as shown in \cref{app:simulation}.

A small entry cost $k_o$ for open systems changes the argument (i), especially at the limit as $\lambda\to 1$: because the expected number of hats targeting open systems is now bounded, only hats with increasingly large $z_i$'s enter any system, sending the $\B:\G$ ratio of entrants in every system to the same limit, $\lim_{z\to\infty} h_\B(z)/h_\G(z)$. This may (but does not necessarily)  invalidate our limit result. However, for reasonable ranges of $\lambda$, the effects of (i) and (ii) above tend to be strong enough for the open system to be more secure under sensible parameters. For example, for a large range of $\lambda$ above the bound of $\lambda_*$ approximated in \cref{app:simulation}, the composition of $\B$ and $\G$ hats that target each system changes only slightly when the cost of examining open systems increases slightly above zero, hence our results should remain valid. 
\end{remark}

\section{Approximating the Number of Hats Needed}\label{app:simulation}

Here, we show that the minimum number $\lambda_*$ of hats needed for our results to hold is well within realistic ranges for a range of reasonable parameter settings.
\cref{app:simulation:params} sets parameters, \cref{app:simulation:results} presents our approximation results, \cref{app:simulation:bound} derives the bound that we approximate, and \cref{app:simulation:verif} explains how we computed and verified the approximations.

\subsection{Parameter setting}\label{app:simulation:params}

Consider one open system $\O$, one closed system $\C$, and $I$ hats where $I\sim\Poisson(\lambda)$, with independent player types and fix parameters as follows:
\[
 V_\O=V_\C=1,\qquad \beta=0.2,\qquad k_\O=0,\qquad k_\C=k\in\{0.5,1,2\},\qquad S=[0,1].
\]
$V_\O=V_\C=1$ normalizes system values. $\beta=0.2$ is a conservative estimate for the fraction of bad hats among all hats; assuming a low $\beta$ strengthens our results. $k_\O=0$ in our models by assumption.
We fix the cost $k\in\{0.5,1,2\}$ of examining closed systems arbitrarily such that, given $z$ as in \eqref{eq:illustration-laws} below, not too high a fraction of hats are precluded from targeting $\C$ because their reward falls short of their cost even if they win for sure (i.e., $z\cdot V_\C < k$).
Given the long history and practice of vulnerability research on proprietary and otherwise ``closed'' systems, we believe too high $k$ values are unrealistic as they would mean that the cost of examining closed systems is so high that very few people would examine them.

Recall from \eqref{eq:v7-congestion-bound} in the proof of \cref{thm:main2} that we define $\gamma:=w(1)/w(0)$ as the ratio of the maximum to the minimum possible weight of a given weight function.
Our theoretical derivation of the bound for $\lambda_*$ (\cref{app:simulation:bound}) works for every atomless distribution $F$ of $s_i\in[0,1]$ and every strictly increasing weight function $w:S\to(0,\infty)$ with a given $\gamma=w(1)/w(0)>1$. Our approximations consider $\gamma\in\{1.2,2,5\}$. Because of the generality of the bound, our approximations are very conservative.

For $z\ge0$ and $0\le\tau<1$, let
\begin{equation}
 H_\B(z)=1-e^{-z/2}\ ,\qquad
 H_\G^{(\tau)}(z)=1-\tau e^{-z/2}-(1-\tau)e^{-z}\ .
 \label{eq:illustration-laws}
\end{equation}
That is, we fix a CDF of motivations for bad hats as an exponential distribution and consider a series of motivation CDFs for good hats which are first-order stochastically dominated as in \eqref{fosd}, parametrized by $\tau$. 
We observe that the mean motivations are $2$ for $\B$ and $1+\tau$ for $\G$; as rewards are calculated as $z\cdot V_\sys$, we see that $z$ largely remains within a small constant range (see \cref{fig:smallconstantrange}) that keeps the intrinsic values $V_\O,V_\C$ meaningful as a factor in rewards. We believe this is realistic in that very high rewards are generally likely to be rarer. Note that $H_\G^{(\tau)}$ converges to $H_\B$ as $\tau\to1$. The larger the $\tau$, $H_\B$ and $H_\G^{(\tau)}$ become increasingly similar; since our result depends on the distributions of good and bad hats being different, we expect our result to be stronger (i.e., our $\lambda_*$ bound to be lower) for lower $\tau$, as shown in \cref{tab:illustration-results}.
All pairs $H_\B,H_\G^{(\tau)}$ for $\tau\in[0,1)$ satisfy \eqref{fosd} and \eqref{uppertail2}.

\begin{figure}[t!]
\begin{subfigure}[b]{0.48\textwidth}
        \centering
        \includegraphics[width=\textwidth]{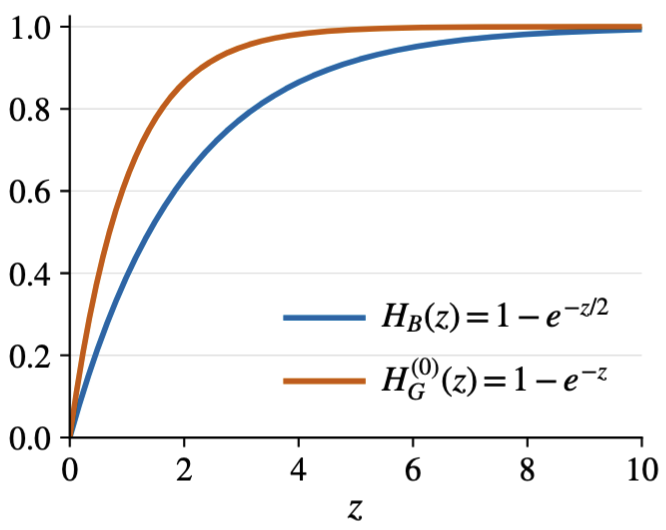} \caption{Distribution functions}
        \label{subfig:dist-fns}
    \end{subfigure}
    \hfill \begin{subfigure}[b]{0.48\textwidth}
        \centering
        \includegraphics[width=\textwidth]{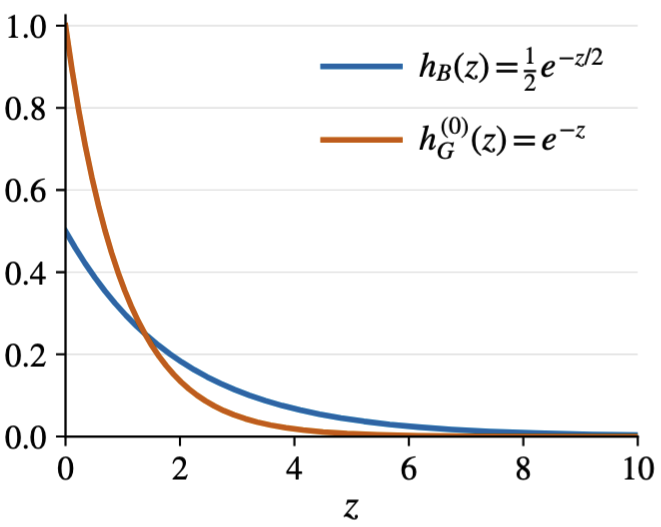} \caption{Densities}
        \label{subfig:densities}
    \end{subfigure}
    \caption{Motivation distributions for our approximations}
\label{fig:smallconstantrange}
\end{figure}

We are interested in $\lambda_*$ such that 
\begin{equation}
\pi_\C^\B>\pi_\O^\B\quad\text{and}\quad \pi_\O^\G>\pi_\C^\G\qquad \forall \lambda \ge\lambda_*\ .
 \label{eq:illustration-comparisons}
\end{equation}

\subsection{Approximation results}\label{app:simulation:results}

\iffullversion\begin{table}[htbp]
\centering
\caption{Conservative upper bound of $\lambda_*$ based on \eqref{eq:illustration-comparisons}, for varying $k$, $\gamma$, and $\tau$}
\label{tab:illustration-results}
\vspace{0.5em}
\renewcommand{\arraystretch}{1.25} \begin{tabular}{c rrr rrr rrr}
\toprule
& \multicolumn{3}{c}{$k = 0.5$} & \multicolumn{3}{c}{$k = 1$} & \multicolumn{3}{c}{$k = 2$} \\
\cmidrule(lr){2-4} \cmidrule(lr){5-7} \cmidrule(lr){8-10}
$\tau$ & {$\gamma=1.2$} & {$\gamma=2$} & {$\gamma=5$} &{$\gamma=1.2$} & {$\gamma=2$} & {$\gamma=5$} & {$\gamma=1.2$} & {$\gamma=2$} & {$\gamma=5$} \\
\midrule
0.00 &  6 & 10 & 28 &  7 &  10  &  24 &   9 &  12 &     25 \\
0.10 &  7 & 10 & 30 & 7 &  11 &    26 &   9 &  13 &     28 \\
0.25 &  7 & 11 & 34 &  8 &  12 &    31 &  11 &  16 &     35 \\
0.50 &  8 & 14 & 46 & 9 &  16 &    46 &  15 &  24 &     63 \\
0.75 & 11 & 19 & 82 & 14 &  26 &   107 &  28 &  55 &    250 \\
0.90 & 15 & 31 & 187 & 24 &  54 &   434 &  74 & 222 &  4,303 \\
\bottomrule
\end{tabular}
\end{table} \fi

Table~\ref{tab:illustration-results} reports upper bounds for $\lambda_*$, for the parameter settings described in \cref{app:simulation:params},
computed by approximation as explained in \cref{app:simulation:bound,app:simulation:verif}.

\iffullversion\else\begin{table}[htbp]
\centering
\caption{Conservative upper bound of $\lambda_*$ based on \eqref{eq:illustration-comparisons}, for varying $k$, $\gamma$, and $\tau$}
\label{tab:illustration-results}
\vspace{0.5em}
\renewcommand{\arraystretch}{1.25} \begin{tabular}{c rrr rrr rrr}
\toprule
& \multicolumn{3}{c}{$k = 0.5$} & \multicolumn{3}{c}{$k = 1$} & \multicolumn{3}{c}{$k = 2$} \\
\cmidrule(lr){2-4} \cmidrule(lr){5-7} \cmidrule(lr){8-10}
$\tau$ & {$\gamma=1.2$} & {$\gamma=2$} & {$\gamma=5$} &{$\gamma=1.2$} & {$\gamma=2$} & {$\gamma=5$} & {$\gamma=1.2$} & {$\gamma=2$} & {$\gamma=5$} \\
\midrule
0.00 &  6 & 10 & 28 &  7 &  10  &  24 &   9 &  12 &     25 \\
0.10 &  7 & 10 & 30 & 7 &  11 &    26 &   9 &  13 &     28 \\
0.25 &  7 & 11 & 34 &  8 &  12 &    31 &  11 &  16 &     35 \\
0.50 &  8 & 14 & 46 & 9 &  16 &    46 &  15 &  24 &     63 \\
0.75 & 11 & 19 & 82 & 14 &  26 &   107 &  28 &  55 &    250 \\
0.90 & 15 & 31 & 187 & 24 &  54 &   434 &  74 & 222 &  4,303 \\
\bottomrule
\end{tabular}
\end{table} \fi

\begin{remark}
As detailed in \cref{app:simulation:bound}, the upper bounds we use for our approximations are derived from relatively loose conditions (see, e.g., \eqref{eq:illustration-probability-bounds} and \eqref{eq:illustration-uniform-bounds} and definitions of $\Delta_\B(\lambda)$ and $\Delta_\G(\lambda)$), that works for all distributions of skills and weight functions of a given $\gamma$. As a consequence, the reported bounds are very conservative, and actual values of $\lambda_*$ for many parameter settings likely are substantially lower than the bounds in \cref{tab:illustration-results}.
\end{remark}

\subsection{Bounding $\lambda_*$}\label{app:simulation:bound}

Let $q(s):= q_\C(s)$ be the probability of targeting $\C$ conditional on skill in equilibrium, and $q_\O(s)=1-q(s)$. 
Recall
\begin{equation}
 P_\sys(s)=\mathbb E\!\left[\frac{w(s)}{w(s)+W_\sys}\right],\qquad
 m_\sys=\lambda\int_0^1q_\sys(v)\,dv,\qquad m_\C+m_\O=\lambda.
 \label{eq:illustration-payoffs}
\end{equation}
The closed-minus-open payoff is $zD(s)-k$, where $D=P_\C-P_\O$.
Consequently, players target $\C$ above $z^*(s)=k/D(s)$ when $D(s)>0$,
and target $\O$ otherwise; set $z^*(s)=\infty$ if $D(s)\le0$. Hence, $q_\C(s)=Q_\tau(D(s))$, where
\begin{equation}
 Q_\tau(d):=
 \begin{cases}
 \big(\beta+(1-\beta)\tau\big)e^{-k/(2d)} +(1-\beta)(1-\tau)e^{-k/d}& \mbox{~~if~~} d>0\\
 0&\mbox{~~if~~}d\le0\ .
 \end{cases}
 \label{eq:illustration-response}
\end{equation}
From \eqref{eq:v7-winner-identity} and \eqref{eq:kfg-laplace}, since $\int_0^s q_\C(v)dv=\mu_\C(s)$ and $\int_0^s q_\O(v)dv=\mu_\O(s)$, we have
\begin{align}
&\pi_\C^\eps=\lambda\beta_\eps\int_0^1 P_\C(s)\big(1-H_\eps(z^*(s))\big)\,ds,\quad
 \pi_\O^\eps=\lambda\beta_\eps\int_0^1 P_\O(s)H_\eps(z^*(s))\,ds,
 \label{eq:illustration-winners}\\
&P_\sys(s)\,=\int_0^\infty w(s)e^{-t w(s)}
 \exp\!\left\{\lambda\int_0^1q_\sys(v)(e^{-t w(v)}-1)\,dv\right\}dt,
 \label{eq:illustration-laplace}
\end{align}
where $\beta_\B=\beta$ and $\beta_\G=1-\beta$. In particular,
$\pi_\sys^\B+\pi_\sys^\G=1-e^{-m_\sys}$.

Recall $w(0)=1$ and let $\gamma=w(1)>1$, so $1\le w(s)\le\gamma$.
For $N_m\sim\operatorname{Poisson}(m)$, define
\begin{equation}
 P_*(m):=\mathbb E\!\left[\frac1{1+\gamma N_m}\right],\qquad
 P^*(m):=\mathbb E\!\left[\frac{\gamma}{\gamma+N_m}\right].
 \label{eq:illustration-probability-bounds}
\end{equation}
Dependence on $\gamma$ is suppressed. Both functions are continuous,
positive, and nonincreasing, with value one at $m=0$. Conditional on
$N$ opponents, $N\le W_\sys\le\gamma N$, which gives
\begin{equation}
 \frac1{1+\gamma N}\le\frac{w(s)}{w(s)+W_\sys}
 \le\frac{\gamma}{\gamma+N},\qquad
 P_*(m_\sys)\le P_\sys(s)\le P^*(m_\sys)\ .
 \label{eq:illustration-uniform-bounds}
\end{equation}

For fixed $\tau$, define for $v\in [0,\lambda]$
\begin{equation}
 \begin{aligned}
 R_\C(v\,|\,\lambda)&=\lambda\cdot Q_\tau\bigl(P_*(v)-P^*(\lambda-v)\bigr)\  ,\\
 R_\O(v\,|\,\lambda)&=\lambda\cdot\left(1-Q_\tau\bigl(P^*(\lambda-v)-P_*(v)\bigr)\right)\ .
 \end{aligned}
 \label{eq:illustration-load-maps}
\end{equation}
Each $R_\sys$ is continuous and nonincreasing, with values in $[0,\lambda]$.
Thus $R_\sys(v\,|\,\lambda)-v$ is strictly decreasing and changes weakly from
nonnegative to nonpositive at the endpoints; it has a unique zero.
Let $x(\lambda)$ and $y(\lambda)$ be, respectively, the $v$ value that solves $R_\C(v\,|\,\lambda)=v$ and $R_\O(v\,|\,\lambda)=v$.

Since $D(s)\ge P_*(m_\C)-P^*(m_\O)$ from \eqref{eq:illustration-uniform-bounds} and $Q_\tau$ is nondecreasing,  it follows that 
\[
 q_\C(s)=Q_\tau(D(s))
 \ge Q_\tau\!\left(P_*(m_\C)-P^*(m_\O)\right).
\]
The right-hand side is independent of $s$. Integrating over $s$ and using $m_\O=\lambda-m_\C$, we obtain $m_\C\ge R_\C(m_\C;\lambda)$ because
\begin{align*}
m_\C & =\lambda\!\int_0^1\! q_\C(s)\,ds \ge \lambda Q_\tau\!\left(P_*(m_\C)\!-\!P^*(m_\O)\right) \\ & =\lambda Q_\tau\!\left(P_*(m_\C)\!-\!P^*(\lambda-m_\O)\right) =R_\C(m_\C;\lambda)\ ,
\end{align*}
and similarly, $m_\O\ge R_\O(m_\O;\lambda)$.
Hence, in every equilibrium,
\begin{equation}
 m_\C\ge x(\lambda)\ ,\qquad m_\O\ge y(\lambda)\ .
 \label{eq:illustration-load-lower}
\end{equation}
Both $x(\lambda)$ and $y(\lambda)$ are nondecreasing in $\lambda$: for fixed $v$, increasing
$\lambda$ decreases $P^*(\lambda-v)$ and thereby increases both
responses in \eqref{eq:illustration-load-maps}, as well as the prefactor, $\lambda$.

Since $D(s)\le P_\C(s)\le P^*(x(\lambda))$, every cutoff satisfies
$z^*(s)\ge\underline z:=k/P^*(x(\lambda))$. Define
\begin{equation}
 \xi(\lambda)=
 \frac{\beta}{\beta+(1-\beta)
       \big(\tau+(1-\tau)e^{-\underline z/2}\big)} >\beta\ .
 \label{eq:illustration-xi}
\end{equation}
The identity
$1-H_\G(z)=(1-H_\B(z))\big(\tau+(1-\tau)e^{-z/2}\big)$ and
\eqref{eq:illustration-winners} imply  $\pi_\C^\B\ge\xi(\lambda)(\pi_\C^\B+\pi_\C^\G)$.
Likewise, $H_\G\ge H_\B$ implies $\pi_\O^\G\ge(1-\beta)(\pi_\O^\B+\pi_\O^\G)$.
Since 
$\pi_\C^\B+\pi_\C^\G = 1-e^{-m_\C} \ge 1-e^{-x(\lambda)}$ and $\pi_\O^\B+\pi_\O^\G = 1-e^{-m_\O} \ge 1-e^{-y(\lambda)}$,
\begin{equation*}
\begin{cases} \pi_\C^\B \ge \xi(\lambda)(1-e^{-x(\lambda)}) \\ \pi_\O^\B \le \beta(1-e^{-m_O}) \le \beta \end{cases} \Rightarrow\ \ 
\pi_\C^\B-\pi_\O^\B \ge\xi(\lambda)(1-e^{-x(\lambda)})-\beta=:\Delta_\B(\lambda). 
\end{equation*}
Similarly, 
\begin{equation*}
\begin{cases} \pi_\C^\G \le 1-\xi(\lambda) \\ \pi_\O^\G \ge (1-\beta)(1-e^{-y(\lambda)}) \end{cases}\!\Rightarrow\ \ 
\pi_\O^\G-\pi_\C^\G \ge(1-\beta)(1-e^{-y(\lambda)})-1+\xi(\lambda)=:\Delta_\G(\lambda). \label{eq:illustration-gap-bounds2}
\end{equation*}
Since $x$, $y$, $\underline z$, and $\xi$ are nondecreasing in $\lambda$,
so are $\Delta_\B$ and $\Delta_\G$. Strict positivity at $\lambda=\lambda_*$
therefore suffices for all $\lambda\ge\lambda_*$.

\subsection{Methodology and Verification}\label{app:simulation:verif}

We had ChatGPT approximate the values in \cref{tab:illustration-results} according to \cref{app:simulation:bound}. We checked each such output value from ChatGPT using the verification algorithm given in pseudocode in \cref{alg:lambda_star}, implemented in about 250 lines of Python as a floating-point approximation using \texttt{scipy}'s \texttt{brentq} for finding fixed points in pseudocode step (4). The verification code was produced pseudocode step by pseudocode step using Google Gemini, and manually verified line by line for adherence to the pseudocode specification. We provide the verification code in full at
\ifanon
    \url{https://anonymous.4open.science/r/security-by-obscurity-gt-3D54}.
\else
    \url{https://codeberg.org/nooske/security-by-obscurity-gt}.
\fi

\begin{algorithm}
\caption{Verification Algorithm for $\lambda^*$}
\label{alg:lambda_star}
\begin{itemize}
\item[(1)] $P_*(m)$ and $P^*(m)$ are calculated as below (per \cref{fn:poisson-def}).
\[
P_*(m) = e^{-m}\sum_{n=0}^{\infty}
\frac{m^n}{n!}\frac{1}{1+\gamma n} \qquad\mbox{and}\qquad
P^*(m) = e^{-m}\sum_{n=0}^{\infty}
\frac{m^n}{n!}\frac{\gamma}{\gamma+n}\ . \]

\item[(2)] Define $Q_\tau(d)$ as \eqref{eq:illustration-response}.

\item[(3)] Define $R_\sys(v|\lambda)$ as \eqref{eq:illustration-load-maps} for $v\in(0,\lambda)$ given $\lambda>0$.

\item[(4)] Define $x(\lambda)$, $y(\lambda)$ as the $v$ that solves $R_\C(v;\lambda)=v$, $R_\O(v;\lambda)=v$ respectively.

\item[(5)] Define $\underline z:=k/P^*(x(\lambda))$ and $\xi(\lambda)$ as \eqref{eq:illustration-xi}.

\item[(6)] Define $\Delta_\B(\lambda):=\xi(\lambda)(1-e^{-x(\lambda)})-\beta$ and
		$\Delta_\G(\lambda):=(1-\beta)(1-e^{-y(\lambda)})-1+\xi(\lambda)$.
		
\item[(7)] Verify that $\Delta_\B(\lambda^*)>0$ and $\Delta_\G(\lambda^*)>0$ for the parameter values in \cref{tab:illustration-results}.

\end{itemize}
\end{algorithm}

\end{document}